% This is samplepaper.tex, a sample chapter demonstrating the
% LLNCS macro package for Springer Computer Science proceedings;
% Version 2.20 of 2017/10/04
%
\documentclass[runningheads]{llncs}
\usepackage{graphicx}
\usepackage{amsmath,amssymb}
\usepackage{algorithm} %%% for comm. to N-sensei
\usepackage{algpseudocode} %%% for comm. to N-sensei
\usepackage{color}
\usepackage{enumerate}
% Used for displaying a sample figure. If possible, figure files should
% be included in EPS format.
%
% If you use the hyperref package, please uncomment the following line
% to display URLs in blue roman font according to Springer's eBook style:
% \renewcommand\UrlFont{\color{blue}\rmfamily}

\begin{document}
\title{A Linear Delay Algorithm for Enumeration of $2$-Edge/Vertex-connected Induced Subgraphs\thanks{The work was partially supported by JSPS KAKENHI Grant Number 20K04978.}}
\titlerunning{Enumeration of 2-Edge/Vertex-connected Induced Subgraphs}
% If the paper title is too long for the running head, you can set
% an abbreviated paper title here
%
\author{Takumi Tada \inst{1} \and
Kazuya Haraguchi\inst{1}}
\authorrunning{T.~Tada and K.~Haraguchi}
% First names are abbreviated in the running head.
% If there are more than two authors, 'et al.' is used.
%
\institute{Graduate School of Informatics, Kyoto University, Japan}
\maketitle              % typeset the header of the contribution
\begin{abstract}
    For a set system $(V,{\mathcal C}\subseteq 2^V)$,
    we call a subset $C\in{\mathcal C}$ a {\em component}.
    A nonempty subset $Y\subseteq C$ is a
    {\em minimal removable set} ({\em MRS}) {\em of $C$}
    if $C\setminus Y\in{\mathcal C}$ and
    no proper nonempty subset $Z\subsetneq Y$ satisfies
    $C\setminus Z\in{\mathcal C}$.  
    In this paper, we consider the problem of enumerating
    all components in a set system such that, 
    for every two components $C,C'\in{\mathcal C}$ 
    with $C'\subsetneq C$, 
    every MRS $X$ of $C$ satisfies 
    either $X\subseteq C'$ or $X\cap C'=\emptyset$.  
    We provide a partition-based algorithm for this problem, 
    which yields the first linear delay algorithms 
    to enumerate all 2-edge-connected induced subgraphs, and
    to enumerate all 2-vertex-connected induced subgraphs.

\keywords{Enumeration of subgraphs
\and 2-edge-connectivity
\and 2-vertex-connectivity
\and Binary partition
\and Linear delay.}
\end{abstract}

\newcommand{\IDX}{{\mathit{Idx}}}
\newcommand{\MCE}{{\mathcal C}_e}
\newcommand{\MM}{{\mathcal M}}
\newcommand{\MS}{{\mathcal S}}
\newcommand{\MC}{{\mathcal C}}
\newcommand{\MCV}{{\mathcal C}_v}
\newcommand{\bbZ}{{\mathbb{Z}}}
\newcommand{\MRS}{{\textsc{Mrs}}}
\newcommand{\MAX}{{\textsc{Max}}}
\newcommand{\INV}{{\textsc{Inv}}}
\newcommand{\CAN}{{\textsc{Can}}}
\newcommand{\CANS}{{\textsc{Can}_{>2}}}
\newcommand{\CANP}{{\textsc{Can}_{=2}}}
\newcommand{\ART}{{\textsc{Art}}}
\newcommand{\CMPMRS}{{\textsc{ComputeMrs}}}

\newcommand{\NINT}{{\text{NonIntersect}}}
\newcommand{\NIL}{{\textsc{Nil}}}

\newcommand{\ANS}{{\mathit{Ans}}}
\newcommand{\SEQ}{{\mathit{Seq}}}
\newcommand{\DEP}{{\mathit{depth}}}

\newcommand{\secref}[1]{Sect.~\ref{sec:#1}}
\newcommand{\tabref}[1]{Table~\ref{tab:#1}}
\newcommand{\figref}[1]{Fig.~\ref{fig:#1}}
\newcommand{\corref}[1]{Corollary~\ref{cor:#1}}
\newcommand{\lemref}[1]{Lemma~\ref{lem:#1}}
\newcommand{\propref}[1]{Proposition~\ref{prop:#1}}
\newcommand{\thmref}[1]{Theorem~\ref{thm:#1}}
% newtheorem
\newtheorem{cor}{Corollary}
\newtheorem{defn}{Definition}
\newtheorem{lem}{Lemma}
\newtheorem{prop}{Proposition}
\newtheorem{thm}{Theorem}

\algrenewcommand{\algorithmicrequire}{\textbf{Input: }}
\algrenewcommand{\algorithmicensure}{\textbf{Output: }}

\long\def\invis#1{}
\section{Introduction}
\label{sec:intro}

An enumeration problem asks to output all feasible solutions
without duplication, whereas an optimization problem
asks to output just one feasible solution that is optimal with respect to
objective function. 
Enumeration is among significant issues
in such fields as discrete mathematics~\cite{Stanley.1999}, % enumerative combinatorics
algorithm theory~\cite{Avis.1996} % reverse search
and at the same time,
it has many practical applications in data mining
and bioinformatics~\cite{Agrawal.1993,Roux.2022,Sese.2010}. % copine, caldera, and so on. 

Given a graph, subgraph enumeration asks to list all subgraphs
that satisfy required conditions. It could find interesting substructures
in network analysis. 
Enumeration of cliques is among such problems~\cite{Chang.2013,Conte.2020,Makino.2004}, % Makino and Uno, and so on; survey recent research
where a clique is a subgraph such that
every two vertices are adjacent to each other
and thus may represent a group of SNS users
that are pairwise friends.
Pursuing further applications,
there have been studied enumeration of subgraphs
that satisfy weaker connectivity conditions,
e.g., pseudo-cliques~\cite{Uno.2010}. % gugure

In this paper, we consider enumeration of subgraphs
that satisfy weakest but fundamental connectivity conditions;
2-edge-connectivity and 2-vertex-connectivity.
For a positive integer $k$,
a graph is $k$-edge-connected (resp., $k$-vertex-connected)
if removal of any $k-1$ edges (resp., any $k-1$ vertices)
does not disconnect the graph and there
are $k$ edges (resp., $k$ vertices)
whose removal disconnects the graph. 
For the case of $k=2$,
we consider enumerating all vertex subsets
that induce 2-edge-connected or 2-vertex-connected subgraphs. 

For a graph $G$, let $V(G)$ and $E(G)$ denote
the set of vertices of $G$ and the set of edges of $G$,
respectively. An enumeration algorithm in general outputs
many solutions, and its {\em delay} refers
to computation time
between the start of the algorithm and the first output;
between any consecutive two outputs; and
between the last output and the halt of the algorithm. 
The algorithm attains {\em polynomial delay}
(resp., {\em linear delay})
if the delay is bounded by a polynomial (resp., a linear function)
with respect to the input size.

The main results of the paper are summarized in the following two theorems.

\begin{thm}
    \label{thm:ec}
    Let $G$ be a simple undirected graph,
    $n:=|V(G)|$ and $m:=|E(G)|$. 
    All $2$-edge-connected induced subgraphs of $G$
    can be enumerated
    in $O(n+m)$ delay and space.
\end{thm}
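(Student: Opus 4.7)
The plan is to instantiate the partition-based enumeration framework from the abstract on the set system $\MCE := \{C \subseteq V(G) : G[C] \text{ is $2$-edge-connected}\}$. Three ingredients are required: verification of the MRS-intersection property for $\MCE$, an $O(n+m)$-time primitive for computing the relevant MRSs, and the delay analysis of the recursive template.

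The first, and main, step is to prove the MRS-intersection property for $\MCE$: for any $C \in \MCE$, any MRS $X$ of $C$, and any $C' \in \MCE$ with $C' \subsetneq C$, one has $X \subseteq C'$ or $X \cap C' = \emptyset$. The intended argument is by contradiction. Assuming that $X$ straddles $C'$, I would use the $2$-edge-connectivity of $G[C']$ and of $G[C \setminus X]$ to show that one of $X \cap C'$ or $X \setminus C'$ is itself a nonempty removable subset of $C$---that is, that $G[C \setminus (X \cap C')]$ or $G[C \setminus (X \setminus C')]$ is already $2$-edge-connected---contradicting the minimality of $X$. This step is the structural heart of the proof and will require a careful case analysis of how the block-cut structure of $G[C]$ and the bridges created by removing parts of $X$ interact with $C'$.

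For the second step I would implement the MRS computation in linear time. A Hopcroft--Tarjan style DFS on $G[C]$ produces the block-cut tree and the set of bridges of $G[C \setminus \{v\}]$ in $O(n+m)$ time for a seed vertex $v$, and from this data an MRS containing $v$ can be grown greedily by walking outward from the blocks containing $v$ and collecting the minimum additional vertex set whose removal restores $2$-edge-connectivity.

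Finally, plugging these into the partition template gives the claimed bound. Each recursive call on a current component $C$ selects a canonical MRS $X$ and branches into subproblems indexed by how $X$ is treated (removed vs.\ retained), and the intersection property ensures that sibling branches enumerate disjoint subfamilies so no duplicates occur. Every internal node of the recursion tree outputs at least one new solution, the per-node work is dominated by a single DFS of cost $O(n+m)$, and the recursion depth is $O(n)$ with only local pointers stored per frame, so both delay and space come out to $O(n+m)$. The principal obstacle is the first step: once the MRS-intersection property is in hand, the algorithmic bounds follow routinely from the general framework.
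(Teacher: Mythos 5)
Your overall architecture matches the paper's: establish the subset-disjoint property for $\MCE$, supply a linear-time MRS oracle, and run the binary-partition template. However, your route to the SD property is different from the paper's and, as sketched, has a gap. The paper does not split $X$ into $X\cap C'$ and $X\setminus C'$ and contradict the minimality of $X$; it first proves a structural characterization of MRSs (\lemref{mrs-e}, due to Ito et al.): every e-MRS of $C$ is either a single vertex of degree $>2$ in $G[C]$ or a maximal path of degree-$2$ vertices. With this in hand the SD property is immediate: if such a path $X$ straddled $C'$, some vertex of $X\cap C'$ would have degree $\le 1$ in $G[C']$, contradicting $2$-edge-connectivity of $C'$ via Whitney's inequality --- the contradiction is with $C'\in\MCE$, not with the minimality of $X$. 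Your plan does work in the main case: when $C'\cap(C\setminus X)\ne\emptyset$, the set $C\setminus(X\setminus C')=(C\setminus X)\cup C'$ is a union of two $2$-edge-connected induced subgraphs sharing a vertex, hence $2$-edge-connected, so $X\setminus C'$ is a smaller removable set; this is a clean argument that avoids the characterization. But in the case $C'\subseteq X$ neither $X\cap C'=C'$ nor $X\setminus C'$ is removable in any obvious way ($C\setminus X$ and $C'$ are then disjoint), and ruling this case out essentially forces you back to proving that an MRS cannot contain a $2$-edge-connected set of size at least $2$, i.e., the very characterization you were trying to bypass. You should also exclude singletons from $\MCE$ as the paper does; otherwise the MRSs of a cycle are not pairwise disjoint and the framework (\lemref{disjoint-mrs-abs}) breaks.

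Second, the delay analysis is not routine. ``Every node outputs a solution and per-node work is $O(n+m)$'' gives total time proportional to the number of solutions times $O(n+m)$, but only $O(n(n+m))$ worst-case \emph{delay}: after the last solution in a deep subtree the algorithm backtracks through up to $n$ stack frames, each incurring one \NIL-returning oracle call of cost $\Theta(n+m)$, before the next output. The paper removes this factor with Uno's alternative output method (emit the solution before the recursive calls at odd depths and after them at even depths), which bounds by a constant the number of recursion-tree edges traversed between consecutive outputs. Finally, your oracle sketch (``grow an MRS greedily from the block-cut data'') is underspecified: the oracle must return an MRS disjoint from a prescribed set $X$, not one containing a seed vertex, and the paper's implementation decomposes $C$ into maximal $2$-vertex-connected blocks, builds an auxiliary graph $H_C$, and tests each candidate in $\CAN(C)$ against the cut point pairs of $H_C$, computed in linear time via SPQR-tree techniques.
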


\begin{thm}
    \label{thm:vc}
    Let $G$ be a simple undirected graph,
    $n:=|V(G)|$ and $m:=|E(G)|$. 
    All $2$-vertex-connected induced subgraphs of $G$
    can be enumerated
    in $O(n+m)$ delay and space.
\end{thm}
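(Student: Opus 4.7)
My plan is to prove Theorem~\ref{thm:vc} by instantiating the general partition-based framework developed in the preceding sections to the set system $(V,\MCV)$, where $\MCV$ collects every $C\subseteq V(G)$ with $G[C]$ 2-vertex-connected. The framework reduces the task to verifying two ingredients: (i) the laminar \MRS{} property, that for every $C'\subsetneq C$ in $\MCV$ and every \MRS{} $X$ of $C$ one has $X\subseteq C'$ or $X\cap C'=\emptyset$; and (ii) an $O(n+m)$-time procedure that enumerates all \MRS{}s of a given $C\in\MCV$. Once both are in place the framework immediately yields the claimed $O(n+m)$ delay and space.

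For (i) I would begin with a structural characterization of the \MRS{}s of a 2-vertex-connected graph $H=G[C]$, expressed in terms of the block-cut trees of $H$ and of $H-v$ for each $v\in C$; in particular every \MRS{} is either a singleton whose removal preserves 2-vertex-connectivity or is forced to include an entire pendant substructure of such a block-cut tree. The laminar property is then proved by contradiction: assume $X$ is an \MRS{} of $C$ with $X\cap C'\ne\emptyset$ and $X\setminus C'\ne\emptyset$; using the 2-vertex-connectivity of $G[C']$, I would exhibit a proper nonempty $Z\subsetneq X$ such that $G[C\setminus Z]$ is still 2-vertex-connected, contradicting minimality. The argument naturally splits according to whether the cut vertices that might appear in $G[C\setminus X]$ lie inside or outside $C'$, and the presence of $C'$ as a 2-vertex-connected ``skeleton'' is what supplies the alternative connectivity witnesses.

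For (ii) I would build the block-cut tree of $H$ in $O(n+m)$ time by Tarjan's biconnected-components algorithm, and then read off all \MRS{}s by scanning leaf blocks and the pendant substructures identified in the characterization above. Organizing this as a single DFS-based sweep yields the complete list in total time $O(n+m)$, regardless of how many \MRS{}s are output; this is the bound required by the framework to guarantee linear delay per enumerated component. The main obstacle is (i): because deleting a vertex can promote non-cut vertices to cut vertices, the interplay between $G[C]$ and the smaller 2-vertex-connected $G[C']$ is more delicate than in the 2-edge-connected analogue treated for Theorem~\ref{thm:ec}, and carefully handling vertices of $C\setminus C'$ that serve as articulation points of $G[C\setminus X]$ is the crux of the proof.
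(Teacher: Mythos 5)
Your overall architecture matches the paper's: verify the SD property for $(V(G),\MCV)$, supply a linear-time MRS oracle, and invoke the general partition framework. However, both of your key ingredients have genuine gaps. For (i), everything hinges on a precise characterization of the v-MRSs, and the one you sketch (pendant substructures of block-cut trees of $H-v$) is not the one that works. The paper proves (\lemref{mrs-v}) that every v-MRS of $C$ lies in $\CAN(C)$: it is either a singleton of degree $>2$ in $G[C]$ or a \emph{maximal two-deg path}, i.e., a maximal set of vertices all of degree exactly $2$ in $G[C]$. With this degree-based structure the SD property (\lemref{SD-v}) is almost immediate: a maximal two-deg path $Y$ meeting both $C'$ and $C\setminus C'$ forces some vertex of $C'$ to have degree at most $1$ in $G[C']$, contradicting 2-vertex-connectivity via \propref{Whitney}. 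Your sketch never isolates this structure, so the case analysis you describe has no concrete handle, and the claim that an MRS is a pendant piece of some block-cut tree is not substantiated (and is not how the minimality argument actually goes; the paper's proof of \lemref{mrs-v} instead grows a two-deg path along a shortest path avoiding a fixed attachment vertex).

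For (ii), the proposed algorithm fails as stated: $H=G[C]$ is 2-vertex-connected, so its block-cut tree is a single block with no articulation points and no leaf blocks to scan. Deciding whether $G[C]\setminus Y$ stays 2-vertex-connected for a candidate $Y\in\CAN(C)$ amounts to deciding whether $Y$ occurs in a \emph{cut point pair} (separation pair) of an auxiliary graph $H_C$ (\lemref{third}), and doing this for all candidates within total $O(n+m)$ time requires enumerating all separation pairs of a 2-vertex-connected graph in linear time --- the Hopcroft--Tarjan / Gutwenger--Mutzel triconnected-components (SPQR-tree) machinery, not Tarjan's biconnected-components algorithm. Testing $H-v$ for each $v$ instead costs $\Theta(n(n+m))$ and destroys linear delay. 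You also leave implicit the reduction for general (not 2-vertex-connected) $G$: the paper first splits $V(G)$ into maximal v-components and, inside the oracle, uses \corref{assumption} to localize candidates to these pieces while excluding articulation points.
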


%%%%%
We achieve the first linear delay algorithms
for enumerating 2-edge/vertex connected induced subgraphs. 
Ito et al.~\cite{ITO.2022} made the first study on
enumeration of 2-edge-connected induced subgraphs,
presenting a polynomial delay algorithm based on reverse search~\cite{Avis.1996} % Avis and Fukuda
such that the delay is $O(n^3m)$. 
Haraguchi and Nagamochi~\cite{Haraguchi.2022} % Algorithmica (2022)
studied an enumeration problem in a confluent set system
that includes enumeration of $k$-edge-connected (resp., $k$-vertex-connected)
induced subgraphs
as special cases, which yields $O(\min\{k+1,n\}n^5m)$
(resp., $O(\min\{k+1,n^{1/2}\}n^{k+4}m)$) delay algorithms.
Wen et al.~\cite{Wen.2019}
proposed an algorithm for enumerating maximal vertex subsets
that induce $k$-vertex-connected subgraphs
such that the total time complexity is 
$O(\min\{n^{1/2},k\}m(n+\delta(G)^2)n)$,
where $\delta(G)$ denotes the minimum degree over the graph $G$. 
%%%%%

We deal with the two subgraph enumeration problems in a more general framework. 
For a set $V$ of elements,
let $\MC\subseteq 2^V$ be a family of subsets of $V$.
A pair $(V,\MC)$ is called a {\em set system}
and a subset $C\subseteq V$ is called a {\em component} if $C\in\MC$. 
A nonempty subset $Y\subseteq C$ of a component $C$
is a {\em removable set of $C$}  if $C\setminus Y\in\MC$.
Further, a removable set $Y$ of $C$ is minimal,
or a {\em minimal removable set} ({\em MRS}),
if there is no $Z\subsetneq Y$ that is a removable set of $C$.
We denote by $\MRS_{\MC}(C)$ the family of all MRSs of $C$. 
Let us introduce the notion of SD property
of set system as follows. 

\begin{defn} A set system $(V,\MC)$ has {\em subset-disjoint} {\em (SD)}
  {\em property}
  if, for any two components $C,C'\in\MC$ such that $C\supsetneq C'$,
    either $Y\subseteq C'$ or $Y\cap C'=\emptyset$ holds
    for every MRS $Y$ of $C$. 
\end{defn}

% \begin{defn} A set system $(V,\MC)$ has {\em subset-disjoint} {\em (SD)}
%   {\em property}
%   if
%   \begin{enumerate}
%   \item $V\in\MC$; and
%   \item for any two components $C,C'\in\MC$ such that $C\supsetneq C'$,
%     either $Y\subseteq C'$ or $Y\cap C'=\emptyset$ holds
%     for every MRS $Y$ of $C$. 
%   \end{enumerate}
% \end{defn}

We consider the problem of enumerating all components
that are subsets of a given component
in a set system with SD property.
We assume that the set system is implicitly given by an oracle
such that, for a component $C\in\MC$ and a subset $X\subseteq C$,
the oracle returns an MRS $Y$ of $C$ that is disjoint with $X$ if exists;
and \NIL\ otherwise.
We denote the time and space complexity of the oracle by
$\theta_t$ and $\theta_s$, respectively. 
We show the following theorem that is 
a key for proving Theorems~\ref{thm:ec} and \ref{thm:vc}. 

\begin{thm}
  \label{thm:SD}
  Let $(V,\MC)$ be a set system with SD property,
  $C\in\MC$ be a component,
  $n:=|C|$ and $q:=\max\{|Y|\mid Y\in\MRS_{\MC}(C'),\ C'\in\MC,\ C'\subseteq C\}$.
  All components that are subsets of $C$ can be enumerated
  in $O(q+\theta_t)$ delay and $O(n+\theta_s)$ space
  with an exception that it takes $\Theta(n)$ time to output $C$
  as the first component. 
\end{thm}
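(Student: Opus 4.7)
The plan is to design a recursive binary-partition algorithm. Define the subroutine $\text{Enum}(C,X)$, with $X\subseteq C\in\MC$, that enumerates all components $C'\in\MC$ satisfying $X\subseteq C'\subseteq C$; the top-level invocation is $\text{Enum}(C,\emptyset)$. On entry, the subroutine emits $C$ and queries the oracle on $(C,X)$ for an MRS $Y$ of $C$ disjoint from $X$. If the oracle returns $\NIL$, the subroutine returns. Otherwise, by the SD property, every proper subcomponent $C'\subsetneq C$ satisfies exactly one of $Y\subseteq C'$ or $Y\cap C'=\emptyset$, so the algorithm splits the remaining work into two recursive calls: $\text{Enum}(C\setminus Y,X)$ (covering $Y\cap C'=\emptyset$) and $\text{Enum}(C,X\cup Y)$ with the output of $C$ suppressed (covering $Y\subseteq C'$ without duplicating $C$).

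\noindent\textbf{Correctness.}
Two short lemmas derived from SD underpin the base case. First, the MRSs of any component are pairwise disjoint: if two MRSs $Y_i\neq Y_j$ intersected, SD applied to $(C,C\setminus Y_i)$ would force $Y_j\subseteq Y_i$, contradicting minimality of $Y_i$. Second, every nonempty removable set of $C$ contains an MRS, obtained by taking an inclusion-minimal removable subset. Together these imply that if no MRS of $C$ is disjoint from $X$, then no proper subcomponent of $C$ contains $X$, justifying termination upon $\NIL$. Exhaustiveness and disjointness of the recursive partition follow directly from SD and from $Y\neq\emptyset$, so each target component is output exactly once. Termination is immediate, since the measure $|C|+|V\setminus X|$ strictly decreases along each recursive branch.

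\noindent\textbf{Complexity.}
The recursion depth is at most $n$, so the call stack, together with an $O(n)$ shared representation of the current component and the oracle's $\theta_s$ workspace, gives the $O(n+\theta_s)$ space bound. The current component is maintained incrementally: each transition between adjacent calls in the recursion tree modifies it by exactly one MRS, taking $O(q)$ time. The recursion tree is a full binary tree whose internal nodes correspond bijectively with output components and whose leaves correspond to $\NIL$ oracle responses; consequently, the total number of oracle queries over an enumeration of $k$ components is $O(k)$, giving amortized $O(\theta_t)$ oracle cost per output.

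\noindent\textbf{Main obstacle.}
The most delicate point is promoting this amortized bound to the claimed \emph{worst-case} delay $O(q+\theta_t)$. I expect to handle this by scheduling the two recursive calls so that along any traversal path between consecutive outputs the number of intervening $\NIL$ oracle responses is constant: each returning $\NIL$ call is immediately paired with either an output from the sibling recursion or a further recursive call whose oracle answer is itself paired with an output, echoing the in-order pattern in which internal and leaf visits alternate in a full binary tree. The initial output of $C$ is handled separately by writing $C$ explicitly at the start, which takes the $\Theta(n)$ time that the theorem permits; every subsequent delay then consists of a single oracle query and an $O(q)$ incremental update of the maintained component.
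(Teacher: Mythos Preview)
Your plan is essentially the paper's: the same binary partition via an MRS $Y\in\MRS_{\MC}(C,X)$, justified by SD, with the current component maintained by MRS-sized incremental updates; correctness and the $O(n+\theta_s)$ space bound are argued correctly. (One small slip: in your tree the outputs sit at the root and at all left children, so $\#\text{outputs}=\#\text{internal}+1$ rather than a bijection with internal nodes; the amortized count still goes through.)

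The genuine gap is the worst-case delay. Your final paragraph only says you ``expect'' a scheduling to work, and the in-order heuristic does not close the argument. In a plain DFS of your binary tree, after the last output inside a subtree you may have to climb a long chain of suppressed right-children before reaching the next non-suppressed node; the total update cost along that chain is $\sum_i|Y_i|$, which can be $\Theta(n)$, not $O(q)$. The paper handles this in two steps. First, it collapses each right-spine into a while loop, obtaining a multi-way recursion tree in which \emph{every} node outputs and each parent--child transition costs $O(q+\theta_t)$ (the stack-index trick makes restoring $X$ on ascent $O(1)$ rather than proportional to the spine length). Second, it applies Uno's alternative output method, emitting at odd depth on descent and at even depth on ascent, so that only $O(1)$ tree edges are traversed between consecutive outputs. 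You need to supply this (or an equivalent explicit mechanism); the scheduling promise as stated is the missing idea, and without it your bound degrades to $O(n+\theta_t)$ delay.
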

% \begin{thm}
%   \label{thm:SD}
%   Let $(V,\MC)$ be a set system with SD property,
%   $n:=|V|$ and $q:=\max\{|Y|\mid Y\in\MRS_{\MC}(C),\ C\in\MC\}$.
%   Given a component $C\in\MC$, 
%   all components that are subsets of $C$ can be enumerated
%   in $O(q+\theta_t)$ delay and $O(n+\theta_s)$ space
%   with an exception that it takes $\Theta(n)$ time to output $C$
%   as the first component. 
% \end{thm}
% \begin{thm}
%   \label{thm:SD}
%   Let $(V,\MC)$ be a set system with SD property,
%   $n:=|V|$ and $q:=\max\{|Y|\mid Y\in\MRS_{\MC}(C),\ C\in\MC\}$.
%   All components in $\MC$ can be enumerated
%   in $O(q+\theta_t)$ delay and $O(n+\theta_s)$ space
%   with an exception that it takes $\Theta(n)$ time to output $V$
%   as the first component. 
% \end{thm}

The paper is organized as follows. 
After making preparations in \secref{prel},
we present an algorithm that enumerates
all components that are subsets of a given component in a set system with SD property,
along with complexity analyses in \secref{P.A},
as a proof for \thmref{SD}.
Then in \secref{P.EV.C},
we provide proofs for Theorems~\ref{thm:ec} and \ref{thm:vc}.
There are two core parts in the proofs.
In the first part, 
given a 2-edge-connected (resp., 2-vertex-connected) graph $G$,
we show that a set system $(V,\MC)$ should have SD property
if $V=V(G)$ and $\MC$ is the family of all vertex subsets
that induce 2-edge-connected (resp., 2-vertex-connected) subgraphs.
This means that 2-edge/vertex-connected induced subgraphs
can be enumerated by using the algorithm developed for \thmref{SD}. 
%Finally in \secref{proof},
Then in the second part, we explain how we design the oracle
to achieve linear delay and space.
% the oracle $\CMPMRS$ can be realized for such set systems
% so that $\theta_t=O(|V(G)|+|E(G)|)$ and $\theta_s=O(|V(G)|+|E(G)|)$,
%
We give concluding remarks in \secref{conc}.

%%%%%%%%%%%%%%%%%%%%%%%%%%%%%%%%%%%%%%%%%%%%%%%%%%%%%%%%%%%%
\section{Preliminaries}
\label{sec:prel}
Let $\bbZ$ and $\bbZ_+$ denote the set of integers
and the set of nonnegative integers, respectively.
For two integers $i,j\in\bbZ$ $(i\le j)$, let us denote 
$[i,j]:=\{i,i+1,\dots,j\}$.

For any sets $P,Q$ of elements,
when $P\cap Q=\emptyset$,
we may denote by $P\sqcup Q$
the disjoint union of $P$ and $Q$
in order to emphasize that they are disjoint.  

\subsection{Set Systems}
\label{sec:prel_set}

Let $(V,\MC)$ be a set system
which does not necessarily have SD property.
For any two subsets $U,L\subseteq V$,
we denote $\MC(U,L):=\{C\in\MC\mid L\subseteq C\subseteq U\}$.
Recall that, for a component $C\in\MC$,
we denote by $\MRS_{\MC}(C)$ the family of all MRSs of $C$.
Further, for $X\subseteq C$,
we denote $\MRS_{\MC}(C,X):=\{Y\in\MRS_{\MC}(C)\mid Y\cap X=\emptyset\}$.
For any two components $C,C'\in\MC$ with $C\supsetneq C'$,
let $Y_1,Y_2,\dots,Y_\ell\subseteq C\setminus C'$
be subsets such that $Y_i\cap Y_j=\emptyset$, $1\le i<j\le\ell$;
and $Y_1\sqcup Y_2\sqcup\dots\sqcup Y_\ell=C\setminus C'$
(i.e., $\{Y_1,Y_2,\dots,Y_\ell\}$ is a partition of $C\setminus C'$). 
Then $(Y_1,Y_2,\dots,Y_\ell)$ is
an {\em MRS-sequence} ({\em between $C$ and $C'$}) if
\begin{itemize}
\item $C'\sqcup Y_1\sqcup\dots\sqcup Y_i\in\MC$, $i\in[1,\ell]$; and
\item $Y_i\in\MRS_{\MC}(C'\sqcup Y_1\sqcup\dots\sqcup Y_i)$, $i\in[1,\ell]$.
\end{itemize}
One easily sees that there exists an MRS-sequence
for every $C,C'\in\MC$ such that $C\supsetneq C'$;
Suppose a sequence of components $(C_0:=C,C_1,C_2,\dots,C_{\ell-1},C_{\ell}:=C')$
such that $C_{i}$,
$i\in[1,\ell]$ is a maximal component among subsets of $C_{i-1}$
that contain $C_{\ell}=C'$ as a subset.
Then $(Y_1:=C_{\ell-1}\setminus C_{\ell},Y_2:=C_{\ell-2}\setminus C_{\ell-1},\dots,Y_\ell:=C_0\setminus C_1)$ is an MRS-sequence between $C$ and $C'$.

The following lemma holds regardless of SD property. 
\begin{lem}
    \label{lem:empty-abs}
    For any set system $(V,\MC)$,
    let $C\in\MC$ and
    $X\subseteq C$.
    It holds that
    $\MRS_{\MC}(C,X)=\emptyset\iff\MC(C,X)=\{C\}$.
\end{lem}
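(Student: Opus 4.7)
The plan is to prove the biconditional by contrapositive on both sides, leveraging the MRS-sequence construction just introduced in the paper.

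For the easier direction ($\Leftarrow$), I would argue the contrapositive: assume $\MRS_{\MC}(C,X) \neq \emptyset$ and exhibit a second element of $\MC(C,X)$. Pick any $Y \in \MRS_{\MC}(C,X)$. Then $Y$ is a nonempty subset of $C$ with $Y \cap X = \emptyset$ and $C \setminus Y \in \MC$. Since $X \subseteq C$ and $X$ is disjoint from $Y$, we get $X \subseteq C \setminus Y \subsetneq C$, hence $C \setminus Y$ is a component in $\MC(C,X)$ distinct from $C$. So $\MC(C,X) \supsetneq \{C\}$.

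For the forward direction ($\Rightarrow$), I would also argue the contrapositive: assume $\MC(C,X) \supsetneq \{C\}$ and produce an MRS of $C$ disjoint from $X$. By assumption, fix some $C' \in \MC(C,X)$ with $C' \subsetneq C$, so $X \subseteq C'$. Here I would invoke the existence of an MRS-sequence between $C$ and $C'$, which is guaranteed by the maximal-chain argument given just before the lemma. That is, there exists $(Y_1,\dots,Y_\ell)$ with $Y_1 \sqcup \dots \sqcup Y_\ell = C \setminus C'$ and, in particular, $Y_\ell \in \MRS_{\MC}(C' \sqcup Y_1 \sqcup \dots \sqcup Y_\ell) = \MRS_{\MC}(C)$. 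Since $Y_\ell \subseteq C \setminus C'$ and $X \subseteq C'$, we have $Y_\ell \cap X = \emptyset$, so $Y_\ell \in \MRS_{\MC}(C,X)$, which is therefore nonempty.

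I expect no serious obstacle here: both directions are immediate from the definitions once one recalls that an MRS of $C$ is exactly a minimal witness of the existence of a strictly smaller component, and that the MRS-sequence machinery certifies that every strict chain $C' \subsetneq C$ terminates at the top with an MRS of $C$ itself. The only mildly delicate point is remembering to use $Y_\ell$ (the last step of the sequence, which is an MRS of $C$) rather than $Y_1$ (which is only an MRS of a possibly smaller component); this is where the containment $X \subseteq C'$ is used to ensure disjointness with $X$.
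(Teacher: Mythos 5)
Your proof is correct and follows essentially the same route as the paper's: the $(\Leftarrow)$ direction exhibits $C\setminus Y$ as a second member of $\MC(C,X)$, and the $(\Rightarrow)$ direction takes a proper subcomponent $C'\in\MC(C,X)$, invokes the MRS-sequence between $C$ and $C'$, and extracts $Y_\ell\in\MRS_{\MC}(C)$ with $Y_\ell\subseteq C\setminus C'\subseteq C\setminus X$. Your remark about using $Y_\ell$ rather than $Y_1$ is exactly the point the paper's argument relies on as well.
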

\begin{proof}
  $(\Longrightarrow)$
  %First we show $\MRS(C,C\setminus X)=\emptyset
  %\Rightarrow\MC(C,X)=\{C\}$.
  Suppose that there is a component $C'\in\MC(C,X)$ such that $C'\ne C$.
  Then $X\subseteq C'\subsetneq C$ holds,
  and there exists an MRS-sequence between $C$ and $C'$,
  say $(Y_1,Y_2,\dots,Y_\ell)$.
  We see that $Y_\ell\in\MRS_{\MC}(C)$ and $Y_\ell\subseteq C\setminus X$.
  Hence we have $Y_\ell\in\MRS_{\MC}(C,X)$ and thus $\MRS_{\MC}(C,X)\ne\emptyset$.
  $(\Longleftarrow)$
  Suppose that $\MRS_{\MC}(C,X)\neq\emptyset$.
  Let $Y\in\MRS_{\MC}(C,X)$, where we have $Y\subseteq C$ and $Y\cap X=\emptyset$.
  Then we see that $C\setminus Y$ is a component such that $C\supsetneq C\setminus Y\supseteq X$, indicating that $C\setminus Y\in \MC(C,X)$.
\qed\end{proof}

As we described in \secref{intro},
we assume that a set system $(V,\MC)$ with SD property 
is given implicitly by an oracle. 
We denote by $\CMPMRS_{\MC}$ the oracle.
Given a component $C\in\MC$ and a subset $X\subseteq C$
as a query to the oracle,
$\CMPMRS_{\MC}(C,X)$ returns one MRS in $\MRS_{\MC}(C,X)$
if $\MRS_{\MC}(C,X)\ne\emptyset$,
and \NIL\ otherwise,
where we denote by $\theta_t$ and $\theta_s$ 
the time and space complexity, respectively. 

The following lemma states 
a necessary condition of SD property
which is not sufficient. 
  
\begin{lem}
    \label{lem:disjoint-mrs-abs}
    Suppose that a set system $(V,\MC)$ with SD property is given.
    For every component $C\in\MC$,
    the minimal removable sets in $\MRS_{\MC}(C)$ are pairwise disjoint.
\end{lem}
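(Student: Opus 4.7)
The plan is to derive the disjointness directly from the SD property by producing, from one MRS, a smaller component to which the SD dichotomy can be applied to the other MRS.

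Fix a component $C\in\MC$ and let $Y_1,Y_2\in\MRS_{\MC}(C)$ be two distinct MRSs. Since $Y_1$ is a removable set of $C$, the set $C':=C\setminus Y_1$ belongs to $\MC$, and $Y_1\ne\emptyset$ gives $C\supsetneq C'$. Now I would invoke the SD property on the pair $(C,C')$ applied to the MRS $Y_2$ of $C$: it must be that either $Y_2\subseteq C'$ or $Y_2\cap C'=\emptyset$.

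The first alternative $Y_2\subseteq C'=C\setminus Y_1$ is precisely $Y_1\cap Y_2=\emptyset$, which is what we want. The second alternative $Y_2\cap C'=\emptyset$ means $Y_2\subseteq C\setminus C'=Y_1$. In that case $Y_2$ would be a removable set of $C$ (since $Y_2\in\MRS_{\MC}(C)$) contained in $Y_1$; minimality of $Y_1$ forces $Y_2=Y_1$, contradicting the assumption that $Y_1$ and $Y_2$ are distinct. Hence only the first alternative can hold.

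There is no real obstacle here, since the SD property is tailor-made for exactly this dichotomy; the only subtlety to flag is that distinctness of $Y_1,Y_2$ must be used to rule out $Y_2\subseteq Y_1$ via minimality of $Y_1$ (rather than minimality of $Y_2$). Since $Y_1,Y_2$ were arbitrary distinct elements of $\MRS_{\MC}(C)$, this gives pairwise disjointness, completing the proof.
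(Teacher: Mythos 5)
Your proof is correct and follows essentially the same route as the paper's: both apply the SD property to the pair $(C, C\setminus Y_1)$ with the MRS $Y_2$, and both use minimality to rule out $Y_2\subseteq Y_1$. The only difference is presentational — you argue directly by cases while the paper assumes $Y_1\cap Y_2\ne\emptyset$ and derives a contradiction.
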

\begin{proof}
  Suppose that there are $Y_1,Y_2\in\MRS_{\MC}(C)$
  such that $Y_1\cap Y_2\ne\emptyset$.
  By minimality, $Y_1\not\subseteq Y_2$ and $Y_1\not\supseteq Y_2$ hold,
  by which we have $Y_2\setminus Y_1\ne\emptyset$.
  Then $C\setminus Y_1$ is a component such that
  $(C\setminus Y_1)\cap Y_2 = (C\cap Y_2)\setminus(Y_1\cap Y_2)=Y_2\setminus(Y_1\cap Y_2)=Y_2\setminus Y_1\ne\emptyset$.
  It holds that $C\setminus Y_1\not\supseteq Y_2$ 
  since otherwise $Y_1\cap Y_2=\emptyset$ would hold.
  We have seen that neither $(C\setminus Y_1)\cap Y_2=\emptyset$ nor
  $C\setminus Y_1\supseteq Y_2$ holds, contradicting that
  $(V,\MC)$ has SD property. 
  \qed\end{proof}

\subsection{Graphs}

% Basic definition of a graph
\invis{
  For a graph $G$,
  we denote by $V(G)$ the vertex set of $G$
  and by $E(G)$ the edge set of $G$.
  The number of vertices of $G$ is called
  {\em order of $G$}
  and a graph of order $0$ or $1$ is called {\em trivial}.
}
Let $G$ be a simple undirected graph. 
For a vertex $v\in V(G)$, we denote by $\deg_G(v)$
the degree of $v$ in the graph $G$.
We let $\delta(G):=\min_{v\in V(G)}\deg_G(v)$. %notate the smallest degree among the vertices of $G$ as $\delta(G)$.
Let $S\subseteq V(G)$ be a subset of vertices.
\invis{
  %%% unused %%%
  We denote by $N_G(S)$ the union of neighbors of
  vertices in $S$, that is,
  $N_G(S)\triangleq\{v\in V(G)\mid\exists u\in S,
  uv\in E(G)\}\setminus S$.
}
A {\em subgraph induced by $S$}
is a subgraph $G'$ of $G$
such that $V(G')=S$ and $E(G')=\{uv\in E(G)\mid u,v\in S\}$
and denoted by $G[S]$.
%We denote by $\MAX_c(S)$ the set of connected components in $G[S]$.
For simplicity, we write 
the induced subgraph $G[V(G)\setminus S]$
as $G-S$. 
Similarly, for $F\subseteq E(G)$,
we write as $G-F$ the subgraph
whose vertex set is $V(G)$
and edge set is $E(G)\setminus F$.

% Connectivity
%%% Let $G$ be a simple undirected graph.
%
A {\em cut-set of $G$} is a subset $F \subseteq E(G)$
such that $G-F$ is disconnected.
In particular, we call an edge that
constitutes a singleton cut-set a {\em bridge}.
We define the {\em edge-connectivity $\lambda(G)$ of $G$}
to be the cardinality of the minimum cut-set of $G$
unless $|V(G)|=1$. %$G$ is a graph of order $1$.
If $|V(G)|=1$, %the order of $G$ is $1$,
then $\lambda(G)$ is defined to be $\infty$.
$G$ is called {\em $k$-edge-connected}
if $\lambda(G)\ge k$.
A {\em vertex cut} of $G$ is a subset $S \subseteq V(G)$
such that $G-S$ is disconnected.
In particular, we call a vertex cut whose size is two
a {\em cut point pair}
and a vertex that constitutes a singleton vertex cut
an {\em articulation point}.
For a subset $S\subseteq V(G)$,
let $\ART(S)$ denote the set of all articulation points
in $G[S]$.
We define the {\em vertex-connectivity $\kappa(G)$ of $G$}
to be the cardinality of the minimum vertex cut of $G$
unless $G$ is a complete graph.
If $G$ is complete,
then $\kappa(G)$ is defined to be $|V(G)|-1$.
$G$ is called {\em $k$-vertex-connected}
if $|V(G)|>k$ and $\kappa(G)\ge k$.
Obviously, $G$ is 2-edge-connected (resp., 2-vertex-connected)
if and only if there is no bridge (resp., no articulation point) in $G$. 

%
%(e.g., Proposition 1.4.2 in \cite{Diestel})
\begin{prop}[\cite{Wh.1932}]
  \label{prop:Whitney}
  Suppose that we are given a simple undirected graph $G$.
  If $|V(G)|\ge2$, then it holds that
  $\kappa(G)\le\lambda(G)\le\delta(G)$. 
\end{prop}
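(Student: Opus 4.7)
The plan is to establish the two inequalities separately. The right-hand inequality $\lambda(G)\le\delta(G)$ is immediate: for a vertex $v$ realizing $\delta(G)$, the $\delta(G)$ edges incident to $v$ form a cut-set, since removing them isolates $v$ from the nonempty set $V(G)\setminus\{v\}$. Hence $\lambda(G)\le\delta(G)$.

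For the left-hand inequality $\kappa(G)\le\lambda(G)$, I would proceed by induction on $|E(G)|$. The base case $|E(G)|=0$ gives $\kappa(G)=\lambda(G)=0$. In the step, one may assume $G$ is connected (otherwise both sides vanish) and set $\lambda:=\lambda(G)$. If $\lambda=1$, then $G$ has a bridge $e=uv$; either $|V(G)|=2$ and $G=K_2$ gives $\kappa=\lambda=1$, or $|V(G)|\ge 3$ and one of $u,v$ is an articulation point, so $\kappa\le 1$. For $\lambda\ge 2$, take a minimum cut-set $F$, any $e=uv\in F$, and form $G':=G-e$. Then $F\setminus\{e\}$ is a cut-set of $G'$ of size $\lambda-1$, and any cut of $G'$ smaller than $\lambda-1$ would, by adjoining $e$, yield a cut of $G$ smaller than $\lambda$, contradicting minimality of $F$; hence $\lambda(G')=\lambda-1$. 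Since $G$ is simple, $G'$ cannot be complete, so the inductive hypothesis supplies a vertex cut $S'$ of $G'$ with $|S'|\le\lambda-1$.

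It remains to convert $S'$ into a vertex cut of $G$ of size at most $\lambda$. If $S'$ is already a vertex cut of $G$ we are done; otherwise $G-S'$ is connected but $(G-S')-e$ is disconnected, forcing $e$ to be a bridge of $G-S'$. When $|V(G-S')|\ge 3$, one of $u,v$ is an articulation point of $G-S'$, and adjoining it to $S'$ produces a vertex cut of $G$ of size at most $\lambda$. The subtle sub-case, which I expect to be the main obstacle, is $|V(G-S')|=2$: here $|V(G)|=|S'|+2\le\lambda+1$, and combining with the already-proved $\lambda\le\delta(G)\le|V(G)|-1$ forces $|V(G)|=\lambda+1$ and $\delta(G)=|V(G)|-1$, so $G$ is complete and $\kappa(G)=|V(G)|-1=\lambda$ holds directly, closing the induction.
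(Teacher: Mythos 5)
The paper does not prove this proposition at all; it is stated as a classical result and attributed to Whitney (1932), so there is no in-paper argument to compare against. Your proof is a correct, self-contained derivation. The inequality $\lambda(G)\le\delta(G)$ is handled exactly as one would expect, and your induction on $|E(G)|$ for $\kappa(G)\le\lambda(G)$ is one of the standard textbook routes (the other common one converts a minimum edge cut directly into a vertex cut without induction). All the delicate points check out under the paper's conventions: $G'=G-e$ is never complete because $G$ is simple, so the inductive hypothesis genuinely yields a vertex cut $S'$ with $|S'|\le\lambda-1$; when $G-S'$ is connected, $e$ is a bridge of it, and the bridge-endpoint-is-articulation-point step is valid precisely when $|V(G-S')|\ge 3$; and your treatment of the residual case $|V(G-S')|=2$ correctly squeezes $|V(G)|=\lambda+1$ against $\lambda\le\delta(G)\le|V(G)|-1$ to force $G$ complete, where the paper's convention $\kappa(K_n)=n-1$ closes the argument. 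The $\lambda=1$ base of the step and the disconnected case are also consistent with the paper's definitions ($\kappa(G)=\lambda(G)=0$ via the empty cut). No gaps.
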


%%%%%%%%%%%%%%%%%%%%%%%%%%%%%%%%%%%%%%%%%%%%%%%%%%%%%%%%%%%%
\section{Enumerating Components in Set System with SD Property}
\label{sec:P.A}

In this section, we propose an algorithm
that  enumerates all components that are subsets of a given component in a set system $(V,\MC)$ with SD property
and conduct complexity analyses, as a proof for \thmref{SD}.

Let us introduce mathematical foundations
for a set system with SD property
that are necessary for designing our enumeration algorithm.

\begin{lem}
  \label{lem:all-abs}
    For a set system $(V,\MC)$ with SD property,
    let $C\in\MC$ be a component and
    $I\subseteq C$ be a subset of $C$.
    % such that $\MRS(C,C\setminus I)\not=\emptyset$.
    %
    %
    For any $Y\in\MRS_{\MC}(C,I)$, it holds that
    $\MC(C,I)=\MC(C\setminus Y,I)\sqcup\MC(C,I\sqcup Y)$. 
    %$\MC(C\setminus Y,I)\cap \MC(C,I\cup Y)=\emptyset$.
\end{lem}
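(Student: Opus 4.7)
The plan is to prove both inclusions between the two sides and simultaneously verify that the two right-hand sets really are disjoint, so that the $\sqcup$ notation is justified. The whole statement turns out to be a direct consequence of the SD property once one handles the boundary case $C'=C$.

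First I would check disjointness. Since $Y\in\MRS_{\MC}(C,I)\subseteq\MRS_{\MC}(C)$, the set $Y$ is nonempty by the definition of an MRS. Any $C'\in\MC(C\setminus Y,I)$ satisfies $C'\subseteq C\setminus Y$ and hence $C'\cap Y=\emptyset$, whereas any $C'\in\MC(C,I\sqcup Y)$ contains $Y$; these are incompatible because $Y\neq\emptyset$.

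Next, the inclusion $\MC(C\setminus Y,I)\sqcup\MC(C,I\sqcup Y)\subseteq\MC(C,I)$ follows by unpacking definitions: for $C'$ in the first summand, $I\subseteq C'\subseteq C\setminus Y\subseteq C$, and for $C'$ in the second, $I\subseteq I\sqcup Y\subseteq C'\subseteq C$.

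For the reverse inclusion $\MC(C,I)\subseteq\MC(C\setminus Y,I)\sqcup\MC(C,I\sqcup Y)$, take $C'\in\MC(C,I)$. If $C'=C$, then $Y\subseteq C=C'$, so $C'\in\MC(C,I\sqcup Y)$. Otherwise $C\supsetneq C'$, and here is where the SD property is invoked: applied to the pair $C\supsetneq C'$ and the MRS $Y$ of $C$, it yields either $Y\subseteq C'$ (whence $I\sqcup Y\subseteq C'\subseteq C$ and $C'\in\MC(C,I\sqcup Y)$) or $Y\cap C'=\emptyset$ (whence $C'\subseteq C\setminus Y$ and $C'\in\MC(C\setminus Y,I)$). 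Either way $C'$ lies in the right-hand union, completing the proof.

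The only subtlety worth calling out is the case split between $C'=C$ and $C'\subsetneq C$, since the SD property is formulated only for strictly nested components; aside from that the argument is a straightforward unpacking of definitions, so I do not expect any real obstacle.
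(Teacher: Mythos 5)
Your proof is correct and follows essentially the same route as the paper's: disjointness of the two parts via $Y\neq\emptyset$, the easy inclusion by unpacking definitions, and the reverse inclusion by splitting into $C'=C$ versus $C'\subsetneq C$ and invoking the SD property in the strict case. The one point you flag as a subtlety (the $C'=C$ boundary case) is handled identically in the paper.
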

\begin{proof}
  Two subsets $Y$ and $I$ are disjoint by definition of $\MRS_{\MC}(C,I)$. 
  $\MC(C\setminus Y,I)$ and $\MC(C,I\sqcup Y)$ are disjoint
  since every component in the former
  contains no element in $Y$, whereas
  every component in the latter contains $Y$ as a subset.
  For the equality, it is obvious that
  $\MC(C,I)\supseteq\MC(C\setminus Y,I)\sqcup\MC(C,I\sqcup Y)$ holds. 
  Let $C'\in\MC(C,I)$.
  If $C'=C$, then $C'\in\MC(C,I\cup Y)$.
  Otherwise (i.e., if $I\subseteq C'\subsetneq C$),
  either $Y\subseteq C'$ or $Y\cap C'=\emptyset$ holds.
  If $Y\subseteq C'$, then $C'\in\MC(C,I\cup Y)$ and
  otherwise, $C'\in\MC(C\setminus Y,I)$.
\qed\end{proof}

\begin{lem}
  \label{lem:valid-abs}
  For a set system $(V,\MC)$ with SD property,
  let $C\in\MC$ be a component,
  $I\subseteq C$ be a subset of $C$,
  and $\MRS_{\MC}(C,I):=\{Y_1,Y_2,\dots,Y_k\}$.
  It holds that
  \[
  \MC(C,I)=\{C\}\sqcup\big(\bigsqcup_{i=1}^k\MC(C\setminus Y_i,I\sqcup Y_1\sqcup\dots\sqcup Y_{i-1})\big). 
  \]
\end{lem}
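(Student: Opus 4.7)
The plan is to derive the identity by iteratively applying \lemref{all-abs}, peeling off one MRS at a time from the summand that is still to be split. Writing $I_i := I \sqcup Y_1 \sqcup \cdots \sqcup Y_i$ for $i \in [0,k]$, the first application (with $Y = Y_1$) gives
\[
\MC(C, I_0) = \MC(C\setminus Y_1, I_0) \sqcup \MC(C, I_1),
\]
and I would iterate by splitting $\MC(C, I_i)$ along $Y_{i+1}$ for $i = 1, 2, \ldots, k-1$.

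The step requiring care is verifying that \lemref{all-abs} applies at each stage, i.e., that $Y_{i+1} \in \MRS_{\MC}(C, I_i)$ for $i \in [0, k-1]$, and identifying the terminal piece $\MC(C, I_k)$. Both points rest on \lemref{disjoint-mrs-abs}: the MRSs $Y_1, \ldots, Y_k \in \MRS_{\MC}(C, I)$ are pairwise disjoint, so $Y_{i+1}$ remains disjoint from $I_i = I \sqcup Y_1 \sqcup \cdots \sqcup Y_i$, yielding $Y_{i+1} \in \MRS_{\MC}(C, I_i)$. For the terminal piece, any MRS of $C$ disjoint from $I_k$ would in particular be disjoint from $I$, hence equal to some $Y_j$; but then it would intersect $Y_j \subseteq I_k$, a contradiction. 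So $\MRS_{\MC}(C, I_k) = \emptyset$, and \lemref{empty-abs} gives $\MC(C, I_k) = \{C\}$.

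Combining the $k$ iterated applications with this terminal identity yields
\[
\MC(C, I) = \bigsqcup_{i=1}^{k} \MC(C\setminus Y_i, I_{i-1}) \;\sqcup\; \{C\},
\]
which is the claimed decomposition once $I_{i-1}$ is expanded. Disjointness of the $k+1$ pieces is inherited from the binary splits in \lemref{all-abs}, but can also be checked directly: $C$ lies outside each $\MC(C\setminus Y_i, I_{i-1})$ since $Y_i \subseteq C$, and for $i < j$ every element of $\MC(C\setminus Y_j, I_{j-1})$ contains $Y_i \subseteq I_{j-1}$ while every element of $\MC(C\setminus Y_i, I_{i-1})$ avoids $Y_i$. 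The main obstacle is really the bookkeeping that each $Y_{i+1}$ persists as an MRS with respect to the enlarged lower bound $I_i$; once the pairwise-disjointness of MRSs is in hand, the induction itself is essentially mechanical.
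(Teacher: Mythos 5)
Your proof is correct and follows essentially the same route as the paper's: iterate \lemref{all-abs} along $Y_1,\dots,Y_k$, use \lemref{disjoint-mrs-abs} to confirm each $Y_{i+1}$ remains an MRS disjoint from the enlarged set $I_i$, and close with \lemref{empty-abs} to identify the terminal piece as $\{C\}$. If anything, you are slightly more explicit than the paper in justifying $\MRS_{\MC}(C,I_k)=\emptyset$ and in verifying the pairwise disjointness of the resulting pieces.
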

\begin{proof}
  By \lemref{all-abs}, we have a partition $\MC(C,I)=\MC(C\setminus Y_1,I)\sqcup\MC(C,I\sqcup Y_1)$.
  We see that $\MRS_{\MC}(C,I\sqcup Y_1)=\{Y_2,\dots,Y_k\}$
  since $\MRS_{\MC}(C,I\sqcup Y_1)\subsetneq\MRS_{\MC}(C,I)$ and
  $Y_1\cap Y_i=\emptyset$, $i\in[2,k]$ holds by \lemref{disjoint-mrs-abs}.
  Applying \lemref{all-abs} recursively, we have
  \begin{align*}
    \MC(C,I)&=\MC(C\setminus Y_1,I)\sqcup\MC(C,I\sqcup Y_1)\\
    &=\MC(C\setminus Y_1,I)\sqcup\MC(C\setminus Y_2,I\sqcup Y_1)\sqcup\MC(C,I\sqcup Y_1\sqcup Y_2)\\
    &=\bigg(\bigsqcup_{i=1}^k
    \MC(C\setminus Y_i,
    I\sqcup Y_1\sqcup\dots\sqcup Y_{i-1})\bigg)\sqcup
    \MC(C,I\sqcup Y_1\sqcup\dots\sqcup Y_k),
  \end{align*}
  where $\MC(V,I\sqcup Y_1\sqcup\dots\sqcup Y_k)=\{C\}$
  holds by \lemref{empty-abs} since
  $\MRS_{\MC}(C,I\sqcup Y_1\sqcup\dots\sqcup Y_k)=\emptyset$.
\qed\end{proof}

\paragraph{Algorithm.}
Let $C\in\MC$, $I\subseteq C$ and $\MRS_{\MC}(C,I):=\{Y_1,Y_2,\dots,Y_k\}$. 
\lemref{valid-abs} describes a partition of $\MC(C,I)$
such that there exists a similar partition for
$\MC(C_i,I_i)$, where 
$C_i=C\setminus Y_i$ and
$I_i=I\sqcup Y_1\sqcup\dots\sqcup Y_{i-1}$, $i\in[1,k]$. 
Then we have an algorithm that enumerates all components in $\MC(C,I)$
by outputting $C$ and then outputting all components in
$\MC(C_i,I_i)$ for each $i\in[1,k]$
recursively. 

%{\color{red} Rough sketch of algorithm.}
For $C\in\MC$ and $I\subseteq C$,
Algorithm~\ref{alg:abs} summarizes a procedure to enumerate all
components in $\MC(C,I)$. 
Procedure $\textsc{List}$ in Algorithm \ref{alg:abs} outputs $C$ in line 2 and
computes $\MC(C_i,I_i)$, $i\in[1,k]$ recursively in line 6.
For our purpose, it suffices to invoke $\textsc{List}(C,\emptyset)$
to enumerate all components in $\MC(C,\emptyset)$.
% where $V$ is the only maximal component in the given set system.

%Algorithm \ref{alg:abs} calls $\textsc{List}(V,\emptyset)$ in line 1,
%and so the algorithm enumerates all components in $\MC$.

\begin{algorithm}[t!]
\caption{An algorithm to enumerate all components in $\MC(C,I)$, where $C\in\MC$ is a component in a set system $(V,\MC)$ with SD property and $I$ is a subset of $C$}
\label{alg:abs}
\begin{algorithmic}[1]

\Require A component $C\in\MC$ and a subset $I\subseteq C$
\Ensure All components in $\MC(C,I)$
\Procedure{$\textsc{List}$}{$C,I$}
%\Comment{enumerate $\MC(C,I)$}
\State Output $C$;
\State $X\gets I$;
\While{$\CMPMRS_{\MC}(C,X)\not=\NIL$}
    \State $Y\gets\CMPMRS_{\MC}(C,X)$;
    \State $\textsc{List}(C\setminus Y,X)$;
    \State $X\gets X\cup Y$
\EndWhile
\EndProcedure
\end{algorithmic}
\end{algorithm}

To prove \thmref{SD},
we introduce a detailed version of the algorithm
in Algorithm~\ref{alg:abs-while}.
% space complexity
We mainly use a stack to store data,
where we can add a given element (push), peek the element that is most recently added (last), remove the last element (pop), and 
shrink to a given size by removing elements that are most recently added (shrinkTo)
in constant time, respectively,
by using an array and managing an index to the last element.
In Algorithm \ref{alg:abs-while},
the set $X$ of Algorithm \ref{alg:abs} is realized by a stack.
In addition, a stack $\SEQ$ stores MRSs, a stack $\IDX$ stores the indices of the boundary of $X$ between parent and child, and an integer $\DEP$ represents the depth of recursion in Algorithm \ref{alg:abs}.
We can see that the algorithm enumerates all components that are subsets of $C$ in $\MC$ by \lemref{valid-abs}.
For better time complexity, we apply the {\em alternative method} to our algorithm in line 4--6 and 13--15 and reduce the delay~\cite{Uno.2003}.
We can also reduce the delay by outputting only difference from the previous output.
For a previous output $C'\in\MC$ and $Y\subseteq V$, we output "$+Y$" (resp., "$-Y$")
if the component is $C'\cup Y$ (resp., $C'\setminus Y$).
It takes $\Theta(|C|)$ time to output the first component $C$,
whereas it takes $O(q)$ time to trace differences in line 9 and 18,
where $q=\max\{|Y|\mid Y\in\MRS_{\MC}(C'),\ C'\in\MC(C,\emptyset)\}$.
% space complexity
We next discuss the space complexity.
The stack $\ANS$ contains the difference between components and thus it uses $O(q)$ space.
We see that $\SEQ$ stores MRS-sequence between a component and $C$ since
$\SEQ$ pops the MRS after traversing the child,
and so it consumes $O(n)$ space.
The stack $X$ uses $O(n)$ space by \lemref{disjoint-mrs-abs} and the definition of $\CMPMRS_{\MC}$,
and moreover $\IDX$ uses only $O(n)$ space since the maximum depth is $n$.

\paragraph{Proof for \thmref{SD}.}

%{\color{red} To achieve the required time and space complexity,}

%\begin{proof}
    % Let $C$ be a component and $X\subseteq C$ be the subset.
    % For $Y\in\MRS_{\MC}(C,X)$,
    % we compute the inputs $C':=C\setminus Y$ and $I':=X$
    % in the recursive call of $\textsc{List}$ in Algorithm \ref{alg:abs-while}.
    % At the end of the recursive call,
    % we store $C'$ and $X':=I'\cup\MRS_{\MC}(C',C'\setminus I')$
    % by \lemref{disjoint-mrs-abs}.
    % This indicates that we can restore $C$ and $X$
    % if we hold $Y$ and the size of $X$
    % since $I'\cap\MRS_{\MC}(C',C'\setminus I')=\emptyset$.
    % Algorithm \ref{alg:abs-while} store MRS in $\SEQ$ for 
    % each search node until the recursive call terminates and
    % restore the parent date using while-loop.

    We can see that Algorithm \ref{alg:abs-while}
    surely enumerates all components in $\MC(C,\emptyset)$
    by \lemref{valid-abs}.
    We first prove that Algorithm \ref{alg:abs-while} works
    in $O(q+\theta_t)$ delay except that
    it takes $\Theta(n)$ time to output $C$ as the first component.
    We should output $C$ as a first component, so that
    preparation (i.e., the statement "$\ANS$.push($C$)")
    needs $\Theta(n)$ time.
    We apply the alternative method to our algorithm, and thus
    we will see that operations in the while-loop
    are done in $O(q+\theta_t)$ time.
    A component can be outputted in $O(q)$ time
    by computing the difference from the previous one,
    except for the first component $C$.
    The difference can be traced by subtracting and adding
    a MRS before and after the depth changes,
    thus it takes $O(q)$ time.
    In addition, $\CMPMRS_{\MC}$ works in $\theta_t$ time 
    by definition and
    another operations are adding and subtracting a MRS,
    where computation time is $O(q)$.

    We next discuss
    the space complexity of Algorithm \ref{alg:abs-while}.
    We obtain that maximum size of the depth is $n$
    since the size of the input component is
    monotonically decreasing during while the depth increases
    and the termination condition that
    the MRS of the input component is initially empty
    is satisfied at most $n$ depth.
    The rest to show is that
    the space for $\SEQ$, $X$, and $\IDX$ are $O(n)$.
    For a component $C'\in\MC(C,\emptyset)$, we obtain that
    the $\SEQ$ is equivalent to the MRS-sequence
    between $C'$ and $C$ since $\SEQ$ store a new MRS
    before the depth increases and discards before
    the depth decreases.
    $X$ can be hold in $O(n)$ space since
    for any subset $I\subseteq C'$,
    $I$ and $\MRS_{\MC}(C',I)$ are pairwise disjoint.
    It is obvious that $\IDX$ uses $O(n)$ space
    since the maximum depth is $n$.
    We then obtain that whole space is $O(n+\theta_s)$ space.
    \hfill\qed
%\end{proof}

\begin{algorithm}[t!]
\caption{An algorithm to enumerate all components that are subsets of $C\in\MC$ in $(V,\MC)$ with SD property}
\label{alg:abs-while}
\begin{algorithmic}[1]

\Require A set system $(V,\MC)$ with SD property and a component $C\in\MC$
\Ensure All components that are subsets of $C$ in $\MC$

\State $\ANS,\SEQ,X,\IDX\gets$ empty stack; $C'\gets C$;
\State $\ANS$.push("$C$"); $\IDX$.push(0); $\DEP\gets 1$;
\While{$\DEP\not=0$}
    \If{$\DEP$ is odd}
        \State Output $\ANS$; $\ANS$.clear()
    \EndIf;
    \If{$\CMPMRS_{\MC}(C',X)\not=\NIL$}
    \Comment{emulate recursive call}
        \State $Y\gets\CMPMRS_{\MC}(C',X)$;
        \State $\ANS$.push("$-Y$");
        \State $\SEQ$.push($Y$); $C'\gets C'\setminus Y$; $\IDX$.push($X$.length());
        \State $\DEP\gets\DEP+1$
    \Else\Comment{trace back}
        \If{$\DEP$ is even}
            \State Output $\ANS$; $\ANS$.clear()
        \EndIf;
        \State $C'\gets C'\cup \SEQ$.last();
        \State $X$.shrinkTo($\IDX$.last()); $\IDX$.pop();
        \State $\ANS$.push("$+\SEQ$.last()");
        \State $X$.push($\SEQ$.last());
        \State $\SEQ$.pop(); $\DEP\gets\DEP-1$
    \EndIf
\EndWhile
\end{algorithmic}
\end{algorithm}

%%%%%%%%%%%%%%%%%%%%%%%%%%%%%%%%%%%%%%%%%%%%%%%%%%%%%%%%%%%%
\section{Enumerating 2-Edge/Vertex-Connected Induced Subgraphs}
\label{sec:P.EV.C}

In this section, we provide proofs for Theorems~\ref{thm:ec} and \ref{thm:vc}. 
Suppose that
we are given a simple undirected graph $G$
that is 2-edge-connected (resp., 2-vertex-connected).
In \secref{const}, we show that a set system $(V,\MC)$
has SD property if $V=V(G)$
and $\MC$ is the family of all vertex subsets
that induce 2-edge-connected subgraphs
(resp., 2-vertex-connected subgraphs).
This indicates that all 
2-edge/vertex-connected induced subgraphs
can be enumerated by the algorithm in the last section.
Then in \secref{linear}, we show how 
to design the oracle for generating an MRS
so that the required computational complexity is achieved. 

\subsection{Constructing Set Systems with SD Property}
\label{sec:const}

% system for 2-edge
We define
\begin{align*}
  \MCE&\triangleq\{C\subseteq V(G)\mid
  G[C]\text{ is }2\text{-edge-connected and }|C|>1\};\\
  \MCV&\triangleq\{C\subseteq V(G)\mid
  G[C]\text{ is }2\text{-vertex-connected}\}.
\end{align*}
Our problem is to enumerate all
subsets in $\MCE$ or $\MCV$.
We do not include singletons (that are 2-edge-connected components
by definition) in $\MCE$ in order to make the problem tractable in our formulation.
The following lemma is immediate by \propref{Whitney}.
\begin{lem}
  \label{lem:subset}
  For a simple undirected graph $G$,
  it holds that $\MCV\subseteq\MCE$.
\end{lem}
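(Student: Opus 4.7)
The plan is to derive the inclusion directly from Proposition~\ref{prop:Whitney} together with the paper's definition of $k$-vertex-connectivity. Take an arbitrary $C\in\MCV$, so that $G[C]$ is 2-vertex-connected by definition. Recalling that the paper defines $G$ to be $k$-vertex-connected precisely when $|V(G)|>k$ and $\kappa(G)\ge k$, we immediately obtain $|C|>2$ (in particular $|C|>1$, which takes care of the cardinality requirement built into $\MCE$) and $\kappa(G[C])\ge 2$.

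Next I would apply Proposition~\ref{prop:Whitney} to the induced subgraph $G[C]$, which is legitimate because $|C|\ge 2$. That yields $\kappa(G[C])\le\lambda(G[C])$, and combining it with $\kappa(G[C])\ge 2$ gives $\lambda(G[C])\ge 2$. By the paper's definition this means $G[C]$ is 2-edge-connected, and together with $|C|>1$ we conclude $C\in\MCE$.

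Since $C\in\MCV$ was arbitrary, $\MCV\subseteq\MCE$ follows. There is no real obstacle here: the statement is essentially a restatement of one direction of Whitney's inequality restricted to the $k=2$ case, and the only tiny subtlety is to verify that the cardinality clause $|C|>1$ in the definition of $\MCE$ is automatically satisfied, which is handled by the paper's requirement $|V(G)|>k$ in the very definition of $k$-vertex-connectivity.
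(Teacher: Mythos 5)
Your proof is correct and follows exactly the route the paper intends: the paper states the lemma is ``immediate by Proposition~\ref{prop:Whitney}'', and your argument simply spells out that application of Whitney's inequality to $G[C]$, including the check that $|C|>1$ is forced by the definition of $2$-vertex-connectivity. Nothing is missing.
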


A {\em block} of a simple undirected
graph $G$ is a maximal connected subgraph
that has no articulation point.
Every block of $G$ is an isolated vertex;
a cut-edge (i.e., an edge whose removal increases the number of connected components); or a maximal $v$-component; e.g., see Remark 4.1.18 in \cite{We.2018}. 
The following lemma is immediate. 
\begin{lem}
    \label{lem:c-h}
    For a given simple undirected graph $G$,
    let $C\in\MCE$ be an e-component.
    It holds that $C=\bigcup_{H\in\MAX_v(C)}H$.
\end{lem}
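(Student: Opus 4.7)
The plan is to establish the equality by two inclusions. The inclusion $\bigcup_{H\in\MAX_v(C)}H\subseteq C$ is immediate, because every $H\in\MAX_v(C)$ is by construction a vertex subset of $C$ inducing a 2-vertex-connected subgraph, hence $H\subseteq C$.

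For the reverse inclusion $C\subseteq\bigcup_{H\in\MAX_v(C)}H$, I would fix an arbitrary $v\in C$ and apply the standard block decomposition of $G[C]$ recalled just above the statement: every connected graph decomposes into blocks, where each block is either an isolated vertex, a cut-edge (bridge), or a maximal 2-vertex-connected subgraph, and every vertex lies in at least one block. Since $C\in\MCE$ and $|C|>1$, the induced subgraph $G[C]$ is 2-edge-connected and in particular connected, so every $v\in C$ is contained in some block $B$ of $G[C]$.

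It then suffices to argue that $B$ is necessarily of the third type, i.e., a maximal v-component. The isolated-vertex case is excluded because $G[C]$ is connected with $|C|>1$, so $v$ is incident to at least one edge, forcing $B$ to contain that edge. The bridge case is excluded because $G[C]$ is 2-edge-connected by assumption and therefore contains no bridge. Hence $B$ is a maximal 2-vertex-connected subgraph of $G[C]$; by the definition of $\MAX_v(C)$ we then have $B\in\MAX_v(C)$ and consequently $v\in B\subseteq\bigcup_{H\in\MAX_v(C)}H$, which completes the inclusion.

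The only subtlety worth flagging is that the paper's definition of 2-vertex-connectedness requires $|V|>2$, so 2-vertex blocks would not qualify as elements of $\MAX_v(C)$; but this is exactly what is ruled out by the no-bridge property that 2-edge-connectedness of $G[C]$ provides. Beyond verifying this alignment of definitions, the argument is a direct invocation of the block structure of graphs and should not require further machinery.
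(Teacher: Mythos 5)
Your proof is correct and follows exactly the route the paper intends: the lemma is stated as ``immediate'' from the block decomposition recalled just before it, and you simply spell out why the isolated-vertex and cut-edge block types are excluded by connectivity and 2-edge-connectivity of $G[C]$. The remark about two-vertex blocks failing the $|V|>2$ requirement is a worthwhile check, and it is indeed subsumed by the absence of bridges.
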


We deal with a set system $(V,\MC)$
such that $V=V(G)$ and $\MC\in\{\MCE,\MCV\}$.
We use the notations and terminologies for set systems
that were introduced in \secref{prel_set}
to discuss the problem of enumerating subsets in $\MCE$ or $\MCV$.
For $S\subseteq V(G)$,
we call $S$ an {\em e-component} (resp., a {\em v-component})
if $S\in\MCE$ (resp., $S\in\MCV$).
By \lemref{subset}, every v-component is an e-component. 
Let $\MAX_e(S)$ (resp., $\MAX_v(S)$) denote the family of
all maximal e-components (resp., v-components)
among the subsets of $S$.
For an e-component $C\in\MCE$ (resp., a v-component $C\in\MCV$),
we write the family $\MRS_{\MCE}(C)$ (resp., $\MRS_{\MCV}(C)$)
of all minimal removable sets of $C$
as $\MRS_e(C)$ (resp., $\MRS_v(C)$) for simplicity.
We call an MRS in $\MRS_e(C)$ (resp., $\MRS_v(C)$)
an {\em e-MRS of $C$} (resp., a {\em v-MRS of $C$}). 
A minimal e-component $C\in\MCE$ induces a cycle (i.e., $G[C]$ is a cycle)
since no singleton is contained in $\MCE$ by definition,
and in this case, it holds that $\MRS_e(C)=\emptyset$.

For $P\subseteq S\subseteq V(G)$,
$P$ is called a {\em two-deg path in $G[S]$}
(or {\em in $S$} for short) if
$\deg_{G[S]}(u)=2$ holds for every $u\in P$.
In particular, $P$ is a
{\em maximal two-deg path in $S$}
if there is no two-deg path $P'$ in $S$
such that $P\subsetneq P'$.
It is possible that a maximal two-deg path consists of just one vertex. 
For an e-component $C\in\MCE$, 
we denote by $\CANP(C)$ the family of 
all maximal two-deg paths in $C$.
We also denote $\CANS(C):=\{\{v\}\mid v\in C,\deg_{G[C]}(v)>2\}$
and define $\CAN(C)\triangleq\CANP(C)\sqcup\CANS(C)$.
It is clear that
every vertex in $C$ belongs to either a maximal two-deg path in $\CANP(C)$
or a singleton in $\CANS(C)$,
where there is no vertex $v\in C$ such that $\deg_{G[C]}(v)\le 1$
since $G[C]$ is 2-edge-connected. 

Let $G$ be a simple undirected graph.
For an e-component (resp., a v-component),
the set of e-MRSs (resp., v-MRSs) is characterized by
\lemref{mrs-e} (resp., \lemref{mrs-v}),
where we see that $\MRS_v(C)\subseteq\MRS_e(C)$ holds
for every v-component $C\in\MCV$.

\begin{lem}[Observation 3 in \cite{ITO.2022}]
  \label{lem:mrs-e}
  For a simple undirected graph $G$, let $C\in\MCE$.
  It holds that 
  $\MRS_e(C)=\{Y\in\CAN(C)\mid C\setminus Y\in\MCE\}$. 
  %(e.g., see Observation 3 in \cite{ITO.2022}).
\end{lem}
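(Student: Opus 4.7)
My plan is to prove the two inclusions separately, handling $\supseteq$ by a direct minimality check and $\subseteq$ by a case analysis that exploits the fact, noted in the excerpt, that $\CAN(C) = \CANP(C) \sqcup \CANS(C)$ partitions $C$.

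For $\supseteq$, I take $Y \in \CAN(C)$ with $C \setminus Y \in \MCE$; the only thing left to verify is minimality, that is, that no nonempty proper subset $Z \subsetneq Y$ satisfies $C \setminus Z \in \MCE$. If $Y \in \CANS(C)$, then $Y$ is a singleton and no such $Z$ exists. If $Y \in \CANP(C)$ is a maximal two-deg path with $|Y| \geq 2$, the path-connectedness of $Y$ in $G[C]$ yields some $v \in Y \setminus Z$ adjacent in $G[C]$ to a vertex of $Z$; since $\deg_{G[C]}(v) = 2$, deletion of $Z$ drops this degree to at most $1$, so by \propref{Whitney} we have $\lambda(G[C \setminus Z]) \leq \delta(G[C \setminus Z]) \leq 1$, and hence $C \setminus Z \notin \MCE$.

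For $\subseteq$, I take $Y \in \MRS_e(C)$: by definition $C \setminus Y \in \MCE$ and $Y$ is minimal, and I must show $Y \in \CAN(C)$. Since $\CAN(C)$ partitions $C$, either $Y$ lies inside a single element $E \in \CAN(C)$, or $Y$ meets at least two distinct elements of the partition. In the first case, $Y = E$ follows: if $E \in \CANS(C)$ is a singleton, the only nonempty subset of $E$ is $E$ itself, while if $E \in \CANP(C)$ then $Y \subsetneq E$ would, by the $\supseteq$ argument applied inside $E$, produce a degree-$\leq 1$ vertex in $G[C \setminus Y]$, contradicting $C \setminus Y \in \MCE$. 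The second case must be ruled out by showing that for some $E$ with $\emptyset \neq Y \cap E \subsetneq Y$, the subset $Y \cap E$ is itself removable from $C$, contradicting the minimality of $Y$.

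The main obstacle is ruling out the second case of $\subseteq$: a priori, removing a proper part $Y \cap E$ could introduce bridges that are only destroyed by the further removal of $Y \setminus E$. To close this gap I would exploit the local structure of $G[C]$ around each element of $\CAN(C)$ — namely that every edge leaving a maximal two-deg path $E \in \CANP(C)$ terminates at a degree-$>2$ vertex, and that each $\{v\} \in \CANS(C)$ interacts with the rest of $G[C]$ only through its incident boundary edges — to argue that any bridge appearing in $G[C \setminus (Y \cap E)]$ either already exists in $G[C \setminus Y]$ or is eliminated only because some vertex of $Y \setminus E$ would itself become a degree-$\leq 1$ remnant, once again contradicting $C \setminus Y \in \MCE$. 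This local-to-global consistency argument is what I expect to demand the most care.
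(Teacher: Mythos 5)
Your forward inclusion ($\supseteq$) is complete, and so is the subcase of ($\subseteq$) in which $Y$ lies inside a single class of the partition $\CAN(C)$. But the remaining case --- a minimal removable set $Y$ meeting two or more distinct elements of $\CAN(C)$ --- is the entire substantive content of the lemma, and you do not prove it: you state what ``must be ruled out'' and describe a local-to-global argument you ``would'' carry out. (The paper itself gives no proof to fall back on; it imports the statement as Observation~3 of Ito et al.~\cite{ITO.2022}.) As sketched, the plan does not hang together. You propose to show that $Y\cap E$ is removable for some class $E$, but the bridge-tracking outline never establishes this (it does not even address connectivity of $G[C\setminus(Y\cap E)]$), and the first branch of your dichotomy, ``the bridge already exists in $G[C\setminus Y]$,'' is vacuous because $G[C\setminus Y]$ is $2$-edge-connected and has no bridges.

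A workable route for this case (note a straddling $Y$ forces $|Y|\ge 2$) plays minimality of $Y$ against subsets other than $Y\cap E$. First, every $v\in Y$ has at most one neighbour in $C':=C\setminus Y$: otherwise $G[C'\cup\{v\}]$ is obtained from the $2$-edge-connected $G[C']$ by attaching $v$ with at least two edges, so $Y\setminus\{v\}$ is a smaller nonempty removable set. Second, $G[Y]$ is connected: if $Y_1$ were a proper component of $G[Y]$, a one-edge cut of $G[C'\cup Y_1]$ would yield a one-edge cut of $G[C]$, so $Y\setminus Y_1$ would be removable. Third, the cut $(Y,C')$ has at least two edges while each vertex of $Y$ contributes at most one, so one can pick two vertices of $Y$ with a neighbour in $C'$ and join them by a shortest path $P$ in $G[Y]$; then $G[C'\cup V(P)]$ contains $G[C']$ plus an ear and is $2$-edge-connected, whence minimality forces $Y=V(P)$, and the same argument forces $P$ to be induced with no internal attachment to $C'$, i.e.\ every vertex of $Y$ has degree $2$ in $G[C]$. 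Maximality of the resulting two-deg path follows because the two attachment vertices in $C'$ each lose an edge in $G[C']$, which must still have minimum degree at least $2$, so they have degree greater than $2$ in $G[C]$. Until something of this sort is written out, the proof is incomplete.
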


\invis{
We denote $\MRS_v(C):=\{Y\in\MRS_e(C)\mid G[C\setminus Y]
\text{ does not have any articulation points}\}$.
The following lemma tells that for a v-component $C$,
it holds that $\MRS_v(C)=\{Y\subseteq C\mid Y
\text{ is a nonempty minimal subset of }C
\text{ such that }C\setminus Y\in\MCV\}$.
}

\begin{lem}
    \label{lem:mrs-v}
    For a simple undirected graph $G$, let $C\in\MCV$.
    It holds that $\MRS_v(C)=\{Y\in\CAN(C)\mid C\setminus Y\in\MCV\}$. 
    %%%It holds that $\MRS_v(C)=\{Y\in\MRS_e(C)\mid C\setminus Y\in\MCV\}$. 
    \invis{
      For a given simple undirected graph $G$,
      let $(V,\MCV)$ be a set system.
      For any v-component $C\in\MCV$,
      $\MRS_v(C)$ is a family of all MRSs of $C$.
    }
\end{lem}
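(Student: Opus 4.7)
My plan is to prove the two inclusions separately, using \lemref{mrs-e} as the analogous 2-edge-connected characterization. For $(\supseteq)$, suppose $Y \in \CAN(C)$ and $C\setminus Y \in \MCV$. Since $\MCV \subseteq \MCE$ by \lemref{subset}, we have $C\setminus Y \in \MCE$, and \lemref{mrs-e} immediately gives $Y \in \MRS_e(C)$. The minimality of $Y$ in $\MCE$ rules out any nonempty $Z \subsetneq Y$ with $C\setminus Z \in \MCE$, hence a fortiori with $C\setminus Z \in \MCV$; combined with $C\setminus Y \in \MCV$, this is exactly $Y \in \MRS_v(C)$.

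For $(\subseteq)$, let $Y \in \MRS_v(C)$. I first show $Y$ is a disjoint union of whole elements of \CAN(C). Since $G[C\setminus Y]$ is 2-vertex-connected, every vertex in $C\setminus Y$ has degree at least $2$ there. If a maximal two-deg path $P$ of $G[C]$ contained a vertex $u \in C\setminus Y$, both $G[C]$-neighbors of $u$ along $P$ would have to lie in $C\setminus Y$ as well (else $\deg_{G[C\setminus Y]}(u)<2$); iterating, the whole of $P$ lies in $C\setminus Y$. Contrapositively, any maximal two-deg path meeting $Y$ is contained in $Y$, so $Y$ decomposes as $E_1 \sqcup \cdots \sqcup E_m$ with each $E_i \in \CAN(C)$.

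It remains to show $m = 1$. Suppose toward contradiction $m \geq 2$. Since $C\setminus Y \in \MCE$, \lemref{mrs-e} supplies some $Y_0 \subseteq Y$ with $Y_0 \in \MRS_e(C) \subseteq \CAN(C)$, and the atomic decomposition of $Y$ forces $Y_0 = E_i$ for some $i$, whence $Y_0 \subsetneq Y$. To contradict the minimality of $Y$ in $\MRS_v(C)$, I would establish $C\setminus Y_0 \in \MCV$. The hard part is ruling out an articulation: if $G[C\setminus Y_0]$ had a cut vertex $w$, I would perform a case analysis on whether $w \in C\setminus Y$ or $w \in Y\setminus Y_0$, and use that $G[C\setminus Y]$ sits inside $G[C\setminus Y_0]$ as a 2-vertex-connected subgraph to force the remaining atoms $E_j$ $(j \neq i)$ to cluster on one side of $w$ in $G[C\setminus Y_0]-w$, thereby exhibiting a proper sub-union of $Y$ whose removal already leaves a v-component. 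This articulation-elimination step is the main obstacle and requires a careful interplay between the atomic structure of $Y$ and the 2-vertex-connectivity of $G[C\setminus Y]$.
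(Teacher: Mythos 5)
Your $(\supseteq)$ direction is correct and is essentially the paper's own argument. The problem is $(\subseteq)$. The atomic decomposition $Y=E_1\sqcup\cdots\sqcup E_m$ into whole elements of $\CAN(C)$ is a valid first step, but the entire mathematical content of this direction is the step you explicitly defer, namely ruling out $m\ge 2$, and the route you sketch does not obviously close. The set $Y_0=E_i$ you extract (via the MRS-sequence argument behind \lemref{empty-abs} together with \lemref{mrs-e}) is merely \emph{some} e-MRS of $C$ contained in $Y$; deleting a single candidate that preserves 2-edge-connectivity can perfectly well create an articulation point (take a v-component in which deleting one degree-2 vertex leaves two triangles glued at a common vertex), so $C\setminus Y_0\in\MCV$ is precisely what has to be proved and is not guaranteed for the particular $Y_0$ you chose. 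Your fallback of ``exhibiting a proper sub-union of $Y$ whose removal already leaves a v-component'' is exactly the assertion in dispute and is not carried out. As written, the proof is incomplete.

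For comparison, the paper avoids the atom-by-atom analysis entirely and argues directly that $Y$ is a single maximal two-deg path. For $|Y|>1$, minimality first forces every $v\in Y$ to satisfy $\deg_{G[C'\cup\{v\}]}(v)\le 1$ where $C'=C\setminus Y$ (otherwise $C'\cup\{v\}\in\MCV$ and $Y\setminus\{v\}$ would be a smaller removable set). One then takes an edge $xy$ with $x\in C'$, $y\in Y$, a vertex $z\in C'$ with $z\ne x$, and a shortest $y$--$z$ path $P$ in $G[C]-\{x\}$; truncating $P$ at the first vertex $z'$ with a neighbour in $C'$ gives a prefix $P'\subseteq Y$ such that $P'$ is a maximal two-deg path in $C'\cup P'$ and $x$ cannot be an articulation point of $G[C'\cup P']$ (since $z'$ is not adjacent to $x$), whence $C'\cup P'\in\MCV$. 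Minimality of $Y$ then forces $P'=Y$, so $Y\in\CAN(C)$. If you wish to retain your decomposition-based plan, your articulation-elimination step would essentially have to reproduce this path construction, so you may as well argue directly as the paper does.
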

\begin{proof}
    Let us denote $\MM:=\{Y\in\CAN(C)\mid C\setminus Y\in\MCV\}$. 
    %%%Let us denote $\MM:=\{Y\in\MRS_e(C)\mid C\setminus Y\in\MCV\}$. 
  
    ($\supseteq$)
    Let $Z\in\MM$. We see that $C\setminus Z\in\MCV$ holds,
    and thus $C\setminus Z\in\MCE$ holds by \lemref{subset}.
    By $Z\in\CAN(C)$ and \lemref{mrs-e}, $Z$ is an e-MRS of $C$. 
    No proper nonempty subset $Z'\subsetneq Z$ satisfies
    $C\setminus Z'\in\MCV$
    since otherwise $C\setminus Z'\in\MCE$ would hold,
    contradicting that $Z$ is an e-MRS of $C$. We see that $Z\in\MRS_v(C)$. 
  
    ($\subseteq$) 
    Let $Y\in\MRS_v(C)$
    and $C':=C\setminus Y$. 
    It is obvious that $C'\in\MCV$.
    We show that $Y\in\CAN(C)$.
    For the case of $|Y|=1$, let $u\in Y$ be a vertex.
    We see that $\deg_{G[C]}(u)\ge 2$ holds by \propref{Whitney} and thus $\{u\}\in\CAN(C)$ holds.
    %We also see that $C\setminus Y\in\MCE$ holds by \lemref{subset}.
    %and that $C$ is not a minimal e-component,
    %indicating that $G[C]$ is not a cycle.
    % Suppose that $Y\notin\CAN(C)$ for contradiction.
    % 
    Suppose that $|Y|>1$ holds.
    We see that $\deg_{G[C'\cup\{v\}]}(v)\le 1$ holds
    for any vertex $v\in Y$ since otherwise
    there would be a vertex $w\in Y$ such that $C'\cup\{w\}\in\MCV$ and $\{w\}\subsetneq Y$,
    which contradicts the minimality of $Y$.
    There is an $xy\in E(G[C])$ such that $x\in C'$ and $y\in Y$ by connectivity of $G[C]$.
    Let $z\in C'$ be a vertex with $x\not=z$;
    such $z$ exists since $C'$ is a v-component and thus $|C'|>2$. 
    By 2-vertex-connectivity of $G[C]$,
    there surely exists a shortest path $P$ between $y$ and $z$ in $G[C]-\{x\}$.
    Suppose visiting the vertices in the path $P$ from $y$,
    and let $z'\not=y$ denote the first vertex such that
    $\deg_{G[C'\cup\{z'\}]}(z')=1$.
    %and let $z'\not=y$ denote the first vertex from $y$ in $P$ such that $\deg_{G[C'\cup\{z'\}]}(z')=1$ holds.
      We denote by $P'\subseteq P$ the subset of vertices that are between $y$ and $z'$ in $P$.
      The path $P'$ is a maximal two-deg path in $C'\cup P'$.
      We see that $z'$ is not adjacent to $x$ by definition
      and thus $x$ cannot be an articulation point in $G[C'\cup P']$. 
      Then $C'\cup P'\in\MCV$ and $P'\subseteq Y$ would hold and
      in consequence $P'=Y$ holds by minimality of $Y$,
      so that $Y\in\CAN(C)$ holds.
  \qed\end{proof}

%%%%%%%%%%%%%%%%%%%%%%%%%%%%%%%%%%%%%%%%%%%%%%%%%%
\invis{
\begin{lem}
    \label{lem:mrs-v}
    For a given simple undirected graph $G$,
    let $(V,\MCV)$ be a set system.
    For any v-component $C\in\MCV$,
    $\MRS_v(C)$ is a family of all MRSs of $C$.
\end{lem}
\begin{proof}    
    ($\supseteq$)
    Let $Y\in\MRS_v(C)$ be a v-MRS.
    We see that $Y$ is a subset of $C$ such that
    $C\setminus Y\in\MCV$ by definition.
    Suppose that $Y'\subsetneq Y$ is a nonempty
    proper subset such that $C\setminus Y'\in\MCV$.
    We then obtain that there is a vertex
    $v\in Y\setminus Y'$ such that
    $\deg_{G[C\setminus Y']}(v)\le 1$ and it holds that $\kappa(G[C\setminus Y'])\le
    \delta(G[C\setminus Y'])\le 1$ by \propref{Whitney},
    which contradicts the definition of $Y'$.

 ($\subseteq$)
    Suppose that $Z$ is a MRS of $C$ such that
    $Z\not\in\MRS_v(C)$ holds.
    We denote $C'=C\setminus Z$.
    It holds $|Z|>1$ by the definition of $\MRS_v$.
    We obtain that there is no vertex $u\in Z$
    such that $\deg_{G[C'\cup\{u\}]}(u)>1$ since otherwise,
    $C'\cup\{u\}$ is obviously 
    v-component, where $Z\setminus\{u\}$ is a
    nonempty subset of $C$.
    Let $z\in Z$ be a vertex such that
    $\deg_{G[C'\cup\{z\}]}(z)=1$.
    There is a vertex $x\in C\setminus Z$ such that
    $xz$ is an edge in $G[C]$ by connectivity.
    Let $y\in C\setminus Z$ be a vertex with $x\not=y$.
    There is a shortest path $P$
    between $z$ and $y$ in $G[C]-\{x\}$
    since $G[C]$ is $2$-vertex-connected.
    Let $z'\in C\setminus Z$ denote
    the first vertex from $z$ in $P$
    such that $\deg_{G[C'\cup\{z'\}]}(z')=1$
    and $P'$ denote the subset of vertices that are
    between $z$ and $z'$ in the path $P$.
    Then $P'$ is clearly a maximal two-deg path in
    $G[C'\cup P']$ and $C'\cup P'\in\MCV$ holds
    since $z'$ is not adjacent to $x$ by definition.
    We then obtain that $Z':=Z\setminus P'$
    is a nonempty subset of $C$ such that
    $C\setminus Z'\in\MCV$, which contradicts
    the definition of $Z$.
\end{proof}

We then obtain that all e-components in $\MCE$
can be enumerated by applying algorithm
proposed in \secref{P.A}
to each maximal e-components in turn.
}

%\paragraph{Properties of $(V,\MCV)$.}
%%%%%%%%%%%%%%%%%%%%%%%%%%%%%%%%%%%%%%%%%%%%%%%%%%
If a simple undirected graph $G$
is 2-edge-connected (resp., 2-vertex-connected),
then the set system $(V(G),\MC_e)$
(resp., $(V(G),\MC_v)$) has SD property,
as shown in the following 
\lemref{SD-e} (resp., \lemref{SD-v}).
%
%We omit the proof for \lemref{SD-v}
%since it can be shown almost similarly to \lemref{SD-e}. 
% The proofs for the two lemmas are almost similar. 

\begin{lem}
  \label{lem:SD-e}
  For a simple undirected 2-edge-connected graph $G$,
  the set system $(V(G),\MC_e)$ has SD property. 
\end{lem}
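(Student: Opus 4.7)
The plan is to appeal to the characterization of e-MRSs in \lemref{mrs-e}: for any e-component $C\in\MCE$, every $Y\in\MRS_e(C)$ belongs to $\CAN(C)=\CANS(C)\sqcup\CANP(C)$, meaning $Y$ is either a singleton at a vertex of degree $>2$ in $G[C]$, or a maximal two-deg path in $C$. Let $C,C'\in\MCE$ with $C\supsetneq C'$ and $Y\in\MRS_e(C)$.

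First, if $Y\in\CANS(C)$ then $Y=\{v\}$ is a singleton, so trivially one of $Y\subseteq C'$ and $Y\cap C'=\emptyset$ holds depending on whether $v\in C'$.

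The interesting case is $Y\in\CANP(C)$, say $Y=\{y_1,y_2,\dots,y_k\}$ arranged along the path so that $y_iy_{i+1}\in E(G)$ and each $y_i$ has $\deg_{G[C]}(y_i)=2$. I would argue by contradiction: assume $Y\cap C'\neq\emptyset$ and $Y\not\subseteq C'$. Since $Y$ is connected as a path, there must exist consecutive indices $i,i+1$ (or $i-1,i$) with $y_i\in C'$ and the adjacent path-vertex outside $C'$. Focus on such a $y_i$. Its two neighbors in $G[C]$ are prescribed by the path structure (either two path-neighbors, or one path-neighbor and one endpoint-attachment $x\in C\setminus Y$), and at least one of these two neighbors lies outside $C'$ by choice of $i$. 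Hence $\deg_{G[C']}(y_i)\le 1$, because the neighbors of $y_i$ in $G[C']$ form a subset of its $G[C]$-neighbors intersected with $C'$. This contradicts 2-edge-connectivity of $G[C']$, since \propref{Whitney} forces $\delta(G[C'])\ge\lambda(G[C'])\ge 2$.

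The main thing to be careful about is the endpoint bookkeeping of the maximal two-deg path $Y$: the two neighbors of $y_i$ in $G[C]$ depend on whether $y_i$ is an interior vertex of the path or an endpoint (in which case one neighbor is the outside attachment $x\in C\setminus Y$). In every subcase, however, the key inequality is the same: $y_i$ has exactly two neighbors in $G[C]$, and since $C'\subseteq C$ and one of these neighbors lies outside $C'$, the degree of $y_i$ in $G[C']$ drops to at most $1$. No further case analysis should be needed, and I do not foresee any real obstacle beyond this endpoint check.
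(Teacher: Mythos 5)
Your proof is correct and follows essentially the same route as the paper's: invoke \lemref{mrs-e} to place $Y$ in $\CAN(C)$, dispose of the singleton case, and for a maximal two-deg path find two adjacent path vertices straddling the boundary of $C'$ so that the inner one has degree at most $1$ in $G[C']$, contradicting $\lambda(G[C'])\ge 2$ via \propref{Whitney}. The endpoint bookkeeping you flag is indeed harmless, exactly as you argue.
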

\begin{proof}
  We see that $V(G)\in\MC_e$ since $G$ is 2-edge-connected. 
  Let $C,C'\in\MC_e$ be e-components
  such that $C\supsetneq C'$ and
  $Y\in\MRS_e(C)$ be an e-MRS of $C$.
  We show that either $Y\subseteq C'$ or $Y\cap C'=\emptyset$ holds.
  The case of $|Y|=1$ is obvious. 
  Suppose $|Y|>1$. By \lemref{mrs-e}, 
  $Y$ induces a maximal two-deg path in $G[C]$ such that
  for any $u\in Y$ it holds $\deg_{G[C]}(u)=2$.
  If $Y\not\subseteq C'$ and $Y\cap C'\ne\emptyset$,
  then there would be two adjacent vertices $v,v'\in Y$
  such that $v\in C\setminus C'$ and $v'\in C'$,
  where we see that $\deg_{G[C']}(v')\le 1$ holds.
  The $C'$ is an e-component and thus $|C'|\ge2$. 
  By \propref{Whitney}, we obtain
  $1\ge\delta(G[C'])\ge\lambda(G[C'])$,
  which contradicts that $C'\in\MCE$.
\qed\end{proof}

\begin{lem}
  \label{lem:SD-v}
  For a simple undirected 2-vertex-connected graph $G$,
  the set system $(V(G),\MC_v)$ has SD property. 
\end{lem}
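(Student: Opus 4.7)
The plan is to mirror the proof of Lemma \ref{lem:SD-e} almost verbatim, substituting the characterization of v-MRSs (Lemma \ref{lem:mrs-v}) for that of e-MRSs. The first observation is that since $G$ is 2-vertex-connected we have $V(G)\in\MC_v$, so the SD condition is nonvacuous. Fix two v-components $C,C'\in\MC_v$ with $C\supsetneq C'$ and a v-MRS $Y\in\MRS_v(C)$; the goal is to prove $Y\subseteq C'$ or $Y\cap C'=\emptyset$.

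By Lemma \ref{lem:mrs-v}, $Y\in\CAN(C)$, so either $Y=\{u\}$ for some $u\in C$ with $\deg_{G[C]}(u)>2$, or $Y$ is a maximal two-deg path in $C$. The singleton case is trivial: $\{u\}$ is either inside or outside $C'$. So I would focus on the case $|Y|>1$, where every vertex of $Y$ has degree exactly $2$ in $G[C]$. Suppose for contradiction that $Y$ straddles the boundary of $C'$, i.e., $Y\not\subseteq C'$ and $Y\cap C'\neq\emptyset$. Since $Y$ induces a path in $G[C]$ (being a two-deg path), walking along this path from a vertex in $Y\cap C'$ toward $Y\setminus C'$ must use a first edge $vv'$ with $v'\in C'$ and $v\in C\setminus C'$.

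Now $\deg_{G[C]}(v')=2$, and one of its two $G[C]$-neighbors, namely $v$, lies outside $C'$, so $\deg_{G[C']}(v')\le 1$. Since $C'$ is a v-component, $G[C']$ is 2-vertex-connected, which by definition forces $|C'|>2$, so Proposition \ref{prop:Whitney} applies and yields $\kappa(G[C'])\le\lambda(G[C'])\le\delta(G[C'])\le 1$, contradicting $C'\in\MC_v$. This contradiction completes the proof.

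The main (and essentially only) obstacle is being careful about the hypothesis of Proposition \ref{prop:Whitney}, which requires $|C'|\ge 2$; this is handled by noting that 2-vertex-connectivity, as defined in the preliminaries, already entails $|C'|>2$. Everything else is a direct translation of the edge-connectivity argument of Lemma \ref{lem:SD-e} to the vertex-connectivity setting, and the essential combinatorial fact exploited is the same: a maximal two-deg path in $G[C]$ cannot cross the boundary of a smaller 2-connected induced subgraph $G[C']$ without forcing a degree-$\le 1$ vertex in $G[C']$.
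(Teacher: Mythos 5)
Your proof is correct and follows essentially the same route as the paper's: reduce to the case $|Y|>1$ via Lemma~\ref{lem:mrs-v}, find an edge $vv'$ of the two-deg path crossing the boundary of $C'$, conclude $\deg_{G[C']}(v')\le 1$, and contradict $C'\in\MC_v$ via Proposition~\ref{prop:Whitney}. Your extra care about the hypothesis $|C'|>2$ for applying Whitney's inequality is also present (more tersely) in the paper's proof.
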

\begin{proof}
  We see that $V(G)\in\MC_v$ since $G$ is 2-vertex-connected.
  Let $C,C'\in\MC_v$ be v-components
  such that $C\supsetneq C'$ and
  $Y\in\MRS_v(C)$ be a v-MRS of $C$.
  We show that either $Y\subseteq C'$ or $Y\cap C'=\emptyset$ holds.
  The case of $|Y|=1$ is obvious. 
  Suppose $|Y|>1$. By \lemref{mrs-v}, 
  $Y$ induces a maximal two-deg path in $G[C]$ such that
  for any $u\in Y$ it holds $\deg_{G[C]}(u)=2$.
  If $Y\not\subseteq C'$ and $Y\cap C'\ne\emptyset$,
  then there would be two adjacent vertices $v,v'\in Y$
  such that $v\in C\setminus C'$ and $v'\in C'$,
  where we see that $\deg_{G[C']}(v')\le 1$ holds.
  The $C'$ is a v-component and thus $|C'|>2$. 
  By \propref{Whitney}, we obtain
  $1\ge\delta(G[C'])\ge\kappa(G[C'])$,
  which contradicts that $C'\in\MCV$.
  \invis{
  $H\in\MCV(H,\emptyset)$ holds by definition.
  For any v-components $C,C'\in\MCV(H,\emptyset)$ and
  a MRS $Y\in\MRS_v(C)$,
  either $Y\subseteq C'$ or $Y\cap C'=\emptyset$ holds
  since $\MCV(H,\emptyset)\subseteq\MCE(H,\emptyset)$
  holds by \lemref{subset},
  $\MRS_v(C)\subseteq\MRS_e(C)$ by definition, and
  $(H,\MCE(H,\emptyset))$ is $\NINT$.
  }
  \qed\end{proof}

\invis{
We first show the following lemmas
that tell the relationship between
$\MAX_v$, $\ART$, and $2$-edge-connectivity.
We then prove the properties of $(V,\MCV)$.

\begin{lem}
    \label{lem:root-vc}
    For a simple undirected graph $G$,
    let $S\subseteq V(G)$.
    It holds that $\MAX_v(S)=
    \bigcup_{C\in\MAX_c(S\setminus\ART(S))}
    \big(C\cup(N_{G[S]}(C)\cap\ART(S))\big)$.
\end{lem}
\begin{proof}
    ($\supseteq$) Any subset $X$ in the family in
    the right hand is a v-component
    since $G[X]$ contains no articulation point.
    $X$ is also maximal; if not so,
    there would be $v\in S\setminus X$ such that
    $v\in N_{G[S]}(X)$.
    This $v$ is adjacent to only one vertex
    in $X\cap\ART(S)$,
    contradicting that $X$ is a v-component.
    
    ($\subseteq$) Let $Y$ be a subset in $\MAX_v(S)$.
    If $Y$ is not in the family of the right hand, then
    it holds either
    $Y\subsetneq X_1$ for some subset $X_1$ in the family
    of the right hand, or
    there are vertices
    $u\in X_1$, $v\in X_2$, and
    $a\in\ART(S)$
    such that the edges $ua$ and $va$ is in $E(G[Y])$
    for some distinct subset $X_1$ and $X_2$
    in the family of the right hand
    since $Y$ is connected.
    If the former is true, then it contradicts that
    $Y$ is a maximal subset
    since $X_1$ is a v-component.
    Otherwise, then $G[Y]$ has
    an articulation point $a$,
    which contradicts that $Y$ is a v-component.
    \qed\end{proof}
}

\invis{
\begin{lem}
    \label{lem:c-h}
    For a given simple undirected graph $G$,
    let $C\in\MCE$ be an e-component.
    It holds that
    $C=\bigcup_{H\in\MAX_v(C)}H$.
\end{lem}
  \begin{proof}
    We obtain $C\supseteq\bigcup_{H\in\MAX_v(C)}H$
    by definition.
    We prove the converse.
    Suppose $v\in C$ be an element
    such that is not in the right hand.
    It holds that
    $\bigcup_{H\in\MAX_v(C)}H=
    \bigcup\bigg(\bigcup_{C'\in\MAX_c(C\setminus\ART(C))}
    \big(C'\cup(N_{G[C]}(C')\cap\ART(C))\big)\bigg)$
    by \lemref{root-vc}, and so
    it holds that $v\in\ART(C)$.
    The articulation point $v$ in $\ART(C)$
    is  however adjacent to some vertices in
    $C'\in\MC(C\setminus\ART(C))$
    since otherwise, i.e., $v$ is adjacent
    only to another articulation point $u$,
    then the edge $vu$ is a bridge of $G[C]$.
\qed\end{proof}
}

%%%%%%%%%%%%%%%%%%%%%%%%%%%%%%%%%%%%%%%%%%%%%%%%%%%%%%%%%%%%
%\subsection{Proofs of Theorems~\ref{thm:ec} and \ref{thm:vc}}
\subsection{Computing MRSs in Linear Time and Space}
\label{sec:linear}

%We show that all e-components
%and v-components can be enumerated in linear delay and space
%with respect to the input graph size, using the algorithm in \secref{P.A}. 
%For a given simple undirected graph $G$,
%let $n:=|V(G)|$ and $m:=|E(G)|$.

%\paragraph{Complexity of Computing MRSs.}
Let $G$ be a simple undirected graph.
We describe how we compute an e-MRS of an e-component
in linear time and space.
The case of computing a v-MRS of a v-component
can be done almost analogously.

Specifically, for a given e-component $C\in\MCE$ 
%(resp., v-component $C\in\MCV$)
and subset $X\subseteq C$,
how we design the oracle $\CMPMRS_{\MCE}(C,X)$
so that it outputs {\em one} e-MRS $A\in\MRS_e(C,X)$
if $\MRS_e(C,X)\ne\emptyset$,
%(resp., {\em one} v-MRS $A\in\MRS_v(C,X)$ if $\MRS_v(C,X)\ne\emptyset$),
and \NIL\ %
%reports that no such e-MRS  
%(resp., no such v-MRS) exists
otherwise,
in linear time and space.  
In what follows, we derive a stronger result that
{\em all} e-MRSs in $\MRS_e(C,X)$
%(resp., {\em all} v-components in $\MRS_v(C,X)$)
can be enumerated in linear delay and space.

The scenario of the proof is as follows. 
%Observe that $\MRS_e(C,C\setminus X)\subseteq\{S\in\CAN(C)\mid S\cap X=\emptyset\}$ holds.
\begin{description}
\item[(1)]
  We show that, to enumerate e-MRSs in $\MRS_e(C,X)$,
  it suffices to examine %enumerate e-MRSs in
  $\MRS_e(S,X)$
  for each $S\in\MAX_v(C)$ respectively.
  %where $\MAX_v(C)=\{C\}$ holds when $C\in\MC_v\subseteq\MC_e$.
  This indicates that we may assume $C$ to be a v-component.
  It is summarized as \corref{assumption}, followed by \lemref{decomposable}.
\item[(2)] Using a certain auxiliary graph,
  we show that it is possible to output in linear time and space
  all candidates in $\CAN(C)$ that are e-MRSs of $C$ (\lemref{mrs});
  recall that all candidates of e-MRSs are contained in \CAN(C) by
  \lemref{mrs-e}. 
\end{description}

\begin{lem}
  \label{lem:decomposable}
  For a simple undirected 2-edge-connected graph $G$, 
  let $V:=V(G)$ and $Y\subsetneq V$ be any subset of $V$. 
  Then $Y$ is an e-MRS of $V$
  if and only if there is $S\in\MAX_v(V)$ such that
  \begin{enumerate}
  \item[\rm (i)] %$Y\subsetneq S$ holds;
    $Y\cap S'=\emptyset$ holds for every $S'\in\MAX_v(V)$ such that $S'\ne S$; and
  \item[\rm (ii)] $Y$ is either
    a path that consists of all vertices in $S$ except one %in $S\setminus\ART(V)$
    or 
    an e-MRS of $S$. 
  \end{enumerate}
\end{lem}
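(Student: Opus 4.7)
The plan is to prove both directions of the equivalence using the block structure of $G$. For $(\Rightarrow)$, suppose $Y\in\MRS_e(V)$. Since $G$ is 2-edge-connected, \lemref{SD-e} says $(V,\MCE)$ has SD property, so for every $S'\in\MAX_v(V)$ either $Y\subseteq S'$ or $Y\cap S'=\emptyset$. Because the blocks of $G$ cover $V$ and any two distinct blocks share at most one vertex (an articulation point), if $Y$ were contained in two blocks we would have $|Y|\le 1$, forcing $Y=\{v\}$ for some articulation point $v$; but then $G-v$ is disconnected, contradicting $V\setminus Y\in\MCE$ via \lemref{mrs-e}. Hence there is a unique $S\in\MAX_v(V)$ containing $Y$, which gives (i).

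For (ii) I split on $|S\setminus Y|$. When $|S\setminus Y|=1$, write $Y=S\setminus\{a\}$. Since $|Y|=|S|-1\ge 2$, \lemref{mrs-e} makes $Y$ a maximal two-deg path in $G$, so every vertex of $Y$ has degree $2$ in $G$. In 2-edge-connected $G$ a degree-$2$ vertex is non-articulation, so each $Y$-vertex has all its neighbors in $S$ and thus also degree $2$ in $G[S]$. A degree-sum argument that uses 2-vertex-connectivity of $G[S]$ (in particular, that $G[S]-a$ must remain connected) then forces $G[S]$ to be a cycle, whence $Y$ is a path on $|S|-1$ vertices. When $|S\setminus Y|\ge 2$, I first prove $S\setminus Y\in\MCE$ by a cut-lifting argument: since every block other than $S$ is attached to $S$ only through articulation points and the block-cut tree of $G$ is acyclic, any bridge or disconnection of $G[S\setminus Y]$ would survive in $G[V\setminus Y]$, contradicting 2-edge-connectivity. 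I then verify $Y\in\CAN(S)$ so that \lemref{mrs-e} applied to $S$ yields $Y\in\MRS_e(S)$: vertices of $Y$ keep their degrees when passing from $G$ to $G[S]$, and if an endpoint-neighbor $u\notin Y$ satisfied $\deg_{G[S]}(u)=2$, then $u$ would have degree $1$ in $G[S\setminus Y]$, producing a bridge in $G[S\setminus Y]$, contradicting the previous step.

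For $(\Leftarrow)$, assume some $S\in\MAX_v(V)$ satisfies (i) and (ii). Condition (i) forces $Y\subseteq S$ and also excludes any articulation point of $G$ from $Y$, since such a point lies in at least one $S'\ne S$. In Case~A, where $Y$ is a maximal two-deg path spanning $S\setminus\{a\}$, the same degree count makes $G[S]$ a cycle; then $V\setminus Y=\{a\}\cup(V\setminus S)$ is the union of all blocks of $G$ other than $S$, glued at $a$ and at any other articulation points, and since gluing 2-edge-connected graphs at common vertices preserves 2-edge-connectivity, $V\setminus Y\in\MCE$. Minimality is immediate: for any non-empty $Z\subsetneq Y$, a boundary vertex in $Y\setminus Z$ is adjacent to $Z$ and thus loses one of its two neighbors, giving degree at most $1$ in $G[V\setminus Z]$. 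In Case~B, where $Y\in\MRS_e(S)$, we already have $S\setminus Y\in\MCE$, and gluing the intact other blocks at the articulation points of $S$ (all of which lie in $S\setminus Y$) again gives $V\setminus Y\in\MCE$. Minimality transfers from $S$ to $V$ by the same cut-lifting lemma: any smaller removable $Z$ for $V$ would yield $S\setminus Z\in\MCE$, contradicting that $Y$ is an MRS of $S$.

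The hardest step in both directions is the cut-lifting argument, which rules out the possibility that a disconnection or bridge inside $G[S\setminus Y]$ is ``patched up'' by routing paths through external blocks. Block-cut-tree acyclicity is essential here: even when articulation points of $S$ lie on both sides of the would-be cut, no external chain of blocks can reconnect those sides without inducing a cycle in the block-cut tree. Once this lemma is pinned down, the remaining pieces are case analyses driven by \lemref{mrs-e}, \lemref{SD-e}, and \propref{Whitney}.
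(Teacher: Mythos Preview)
Your proof is correct and follows the same high-level structure as the paper's, but you organize the case analysis differently and use a heavier tool for the key step.

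For necessity~(ii), the paper splits on whether $G[S]$ is a cycle, whereas you split on $|S\setminus Y|$. These are equivalent splits (your degree-count argument shows $|S\setminus Y|=1$ forces $G[S]$ to be a cycle), so nothing is gained or lost here. The more substantive difference is in proving $S\setminus Y\in\MCE$ when $G[S]$ is not a cycle. You argue by ``cut-lifting'': a bridge or disconnection of $G[S\setminus Y]$ would survive in $G[V\setminus Y]$ because any external repair path would create a cycle in the block--cut tree. The paper instead observes directly that every path between two vertices of $S$ must stay inside $S$ (otherwise the visited blocks would merge with $S$, contradicting maximality), so the two edge-disjoint paths in $G-Y$ between any $u,v\in S\setminus Y$ already lie in $G[S]$, giving $S\setminus Y\in\MCE$ immediately. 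This is shorter and avoids the block--cut-tree machinery; your argument is correct but does more work. In the sufficiency direction you again invoke cut-lifting to transfer minimality from $S$ to $V$, while the paper simply checks $Y\in\CAN(V)$ and $V\setminus Y\in\MCE$ directly. Both routes are fine; the paper's is more economical.

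One small remark: in Case~A of your sufficiency argument you treat ``path'' as ``maximal two-deg path,'' which is slightly stronger than the literal statement of the lemma. The paper's proof makes the same implicit identification (handling the path case only when $G[S]$ is a cycle), so you are not introducing a new gap, but it is worth being aware that the lemma as stated is loose at this edge case (e.g.\ when $G$ itself is a cycle).
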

\begin{proof}
  For the necessity,
  every v-component $S\in\MAX_v(V)$ is an e-component.
  % By \lemref{dismrs-e},
  By the definition of SD property, %\lemref{disjoint-mrs-abs},
  either $Y\subsetneq S$ or $Y\cap S=\emptyset$ should hold.
  Suppose that there are two distinct v-components
  $S,S'\in\MAX_v(V)$ such that $Y\subsetneq S$ and $Y\subsetneq S'$.
  This leads to $|Y|=1$ since $1\ge |S\cap S'|\ge |Y|\ge 1$,
  where the first inequality holds by the fact that
  two blocks share at most one vertex (e.g., Proposition~4.1.19 in \cite{We.2018}). 
  Then $Y$ is a singleton that consists
  of an articulation point of $G$,
  contradicting that $Y$ is an e-MRS of $V$.
  There is at most one $S\in\MAX_v(V)$ that contains $Y$
  as a proper subset, and such $S$ surely exists
  since there is at least one v-component in $\MAX_v(V)$
  that intersects $Y$ by \lemref{c-h}, which shows (i). 
  To show (ii),
  suppose that $G[S]$ is a cycle.
  Then it holds that $|S\cap\ART(V)|=1$;
  if $|S\cap\ART(V)|\ge2$, then no singleton or path in $S$ is an e-MRS of $V$,
  contradicting that $Y\subsetneq S$;
  if $|S\cap\ART(V)|=0$, then $S$ is a connected component in $G$,
  contradicting that $G$ is connected. Then $Y$ should be the
  path in $S$ that consists of all vertices except the only articulation point. 
  Suppose that $G[S]$ is not a cycle.  
  Let $u,v\in S$ be two distinct vertices.
  We claim that every path between $u$ and $v$ should not visit
  a vertex out of $S$; if there is such a path, 
  then the union of v-components visited by the path
  would be a v-component containing $S$,
  contradicting the maximality of $S$. 
  In the graph $G-Y$,
  there are at least two edge-disjoint paths between
  any two vertices $u,v\in S-Y$.
  These paths do not visit any vertex out of $S$,
  and thus $S-Y$ is an e-component.
  It is easy to see that $Y\in\CAN(S)$. %showing (i). 

  For the sufficiency, 
  suppose that $G[S]$ is a cycle.
  There is no e-MRS of $S$ by definition.
  The set $Y$ should be a path that consists of all vertices in $S$
  except one, and by (i), the vertex that is not contained in $Y$
  should be an articulation point (which implies $|S\cap\ART(V)|=1$).
  Suppose that $G[S]$ is not a cycle.
  An e-MRS of $S$ exists since $S$ is a non-minimal e-component. 
  Let $Y$ be an e-MRS of $S$ that satisfies (i),
  that is, $Y$ contains no articulation points in $\ART(V)$. 
  In either case, it is easy to see that $V\setminus Y$ is an e-component
  and that $Y\in\CAN(V)$ holds,
  showing that $Y$ is an e-MRS of $V$.   
\qed\end{proof}

%%%%%%%%%% COROLLARY ON DECOMPOSABILITY %%%%%%%%%%
\begin{cor}
  \label{cor:assumption}
  For a given simple undirected graph $G$,
  let $C\in\MCE$ be an e-component. 
  Then it holds that
  \begin{align*}
    \MRS_e(C)&=\big(\bigsqcup_{S\in\MAX_v(C):\ G[S]\textrm{\ is\ not\ a\ cycle}}\{Y\in\MRS_e(S)\mid Y\cap\ART(C)=\emptyset\}\big)\\
    &\sqcup\big(\bigsqcup_{S\in\MAX_v(C):\ G[S]\textrm{\ is\ a\ cycle\ and\ }|S\cap\ART(C)|=1}(S\setminus\ART(C))\big).
  \end{align*}
  %where the family of subsets on the right hand
  %is pairwise disjoint. 
  \invis{
    For a given simple undirected connected graph $G$,
    let $C\in\MCE$ be an e-component,
    $A \subsetneq C$ be a subset of $C$,
    and $H_1,H_2,\dots,H_r\in\MAX_v(C)$
    be all maximal v-components of $G[C]$.
    $A$ is an e-MRS of $C$ if and only if,
    one of the following conditions holds
    for some integer $s\in[1,r]$.

    \begin{enumerate}
        \item $G[H_s]$ is a cycle such that contains
        only one articulation point $v_a$ of $G[C]$ and
        $A=H_s\setminus\{v_a\}$.
        \item $A$ is a minimal removable set of $H_s$ and
        does not contain any articulation point of $G[C]$.
    \end{enumerate}
  }
\end{cor}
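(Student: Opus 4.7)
The plan is to apply \lemref{decomposable} to the graph $G':=G[C]$, which is $2$-edge-connected since $C\in\MCE$. Note $V(G')=C$, the maximal v-components in $G'$ coincide with $\MAX_v(C)$, and the articulation points of $G'$ are exactly $\ART(C)$. Thus \lemref{decomposable} states that $Y$ is an e-MRS of $C$ iff there is $S\in\MAX_v(C)$ with (i) $Y\cap S'=\emptyset$ for every $S'\in\MAX_v(C)$ with $S'\ne S$, and (ii) $Y$ is either a path that consists of all vertices of $S$ except one, or an e-MRS of $S$.

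I would then split on whether $G[S]$ is a cycle. If $G[S]$ is not a cycle, then by the proof of \lemref{decomposable} the "path" alternative in (ii) cannot yield an e-MRS (since $G[S]$ is not a cycle such a set does not generally satisfy \lemref{mrs-e}), so $Y$ must be an e-MRS of $S$. Condition (i) then becomes "$Y$ contains no vertex that lies in another maximal v-component of $G[C]$." Using the standard block-decomposition fact that a vertex of $C$ is an articulation point of $G[C]$ iff it belongs to two distinct maximal v-components (and using $Y\subseteq S$), this is exactly $Y\cap\ART(C)=\emptyset$. This produces the first term.

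If $G[S]$ is a cycle, then $S$ has no e-MRS, so by (ii) $Y$ must be a path of all vertices of $S$ except a single vertex $v^\star\in S$. Condition (i) forces every vertex of $S\setminus\{v^\star\}$ to lie in no other maximal v-component, i.e.\ to not be in $\ART(C)$. Consequently $S\cap\ART(C)\subseteq\{v^\star\}$. On the other hand, $|S\cap\ART(C)|\ge 1$ must hold, since otherwise $S$ would be a connected component of $G$ disjoint from the rest of $C$, contradicting connectivity of $G[C]$ (assuming $S\ne C$; the degenerate case $S=C$ is easily handled separately, or by noting $V(G)\ne V(G)$ is excluded by $Y\subsetneq C$ in the hypothesis of \lemref{decomposable}). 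Hence $|S\cap\ART(C)|=1$, its unique element is $v^\star$, and $Y=S\setminus\ART(C)$. This produces the second term.

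Finally I would verify that the union is disjoint. Within each term, distinct choices of $S$ give e-MRSs contained in different blocks, which can overlap only in an articulation point; but every $Y$ appearing in the first term avoids $\ART(C)$, and every $Y$ in the second term equals $S\setminus\ART(C)$ and hence also avoids $\ART(C)$. Thus two such $Y$'s from different $S$'s would have to be empty, which is impossible. An e-MRS appearing in the first term for some $S$ cannot coincide with one from the second term for some $S'\ne S$ by the same argument, and cannot coincide with one from the second term for the same $S$ because the two alternatives in (ii) of \lemref{decomposable} apply in mutually exclusive regimes (cycle vs.\ non-cycle). The main obstacle, as anticipated, is the careful handling of the cycle case to pin down $|S\cap\ART(C)|=1$ and to identify the excluded vertex with the articulation point; everything else follows directly from \lemref{decomposable} and the block-articulation characterization.
\qed
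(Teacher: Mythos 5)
Your proposal is correct and follows exactly the route the paper intends: the corollary is stated without proof as an immediate consequence of Lemma~\ref{lem:decomposable} applied to the $2$-edge-connected graph $G[C]$, with condition (i) translated into $Y\cap\ART(C)=\emptyset$ via the block--articulation-point characterization and the two alternatives of condition (ii) matched to the cycle/non-cycle dichotomy. Your extra care in the cycle case (pinning down $|S\cap\ART(C)|=1$ and handling $S=C$ separately) and in checking disjointness is consistent with, and somewhat more explicit than, what the paper leaves implicit.
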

By the corollary,
to obtain e-MRSs of an e-component $C$,
it suffices to examine all maximal v-components
in $\MAX_v(C)$ respectively.

%%%%%%%%%% AUXILIARY GRAPHS %%%%%%%%%%
We observe the first family in the right hand in \corref{assumption}.
Let $C$ be a v-component
such that $G[C]$ is not a cycle.
For each path $P\in\CANP(C)$,
there are exactly two vertices $u,v\in \CANS(C)$
such that $u$ is adjacent to one endpoint of $P$
and $v$ is adjacent to the other endpoint of $P$.
We call such $u,v$ {\em boundaries of $P$}.
We denote the pair of boundaries of $P$ by $B(P)$,
that is, $B(P):=uv$.
%%% We define the union of $P\cup B(P)$ over all $P\in\CANP(C)$
%%% to be $\Lambda_{=2}(C)\triangleq\{P\cup B(P)\mid P\in\CANP(C)\}$.
We define $\Lambda_{>2}(C)\triangleq\{uv\in E(G)\mid u,v\in\CANS(C)\}$.
Let $\Lambda(C):=\CANP(C)\sqcup\Lambda_{>2}(C)$. 
We then define an auxiliary graph $H_C$ so that
\begin{align*}
  V(H_C):=&\CANS(C)\sqcup\CANP(C)\sqcup\Lambda_{>2}(C)\\
  =&\CANS(C)\sqcup\Lambda(C)=\CAN(C)\sqcup\Lambda_{>2}(C),\\
  E(H_C):=&\{uP\subseteq V(H_C)\mid u\in\CANS(C),\ P\in\CANP(C),\ u\in B(P)\}\\ & \sqcup
  \{ue\subseteq V(H_C)\mid u\in \CANS(C),\ e\in\Lambda_{>2}(C),\ u\in e\}. 
\end{align*}
We call a vertex in $\CANS(C)$ an {\em ordinary} vertex, whereas
we call a vertex in $\Lambda(C)=\CANP(C)\sqcup\Lambda_{>2}(C)$
an {\em auxiliary} vertex.

The auxiliary graph $H_C$ is obtained by
editing $G[C]$ as follows; 
replace each path $P\in\CANP(C)$ with a single vertex
(which is an auxiliary vertex in $\CANP(C)$); and then
insert a vertex into each edge that joins two vertices in $\CANS(C)$
(which is an auxiliary vertex in $\Lambda_{>2}(C)$). 
One readily sees that $H_C$ is 2-vertex-connected. 

\invis{
See \figref{aux} for an example of
an auxiliary graph of a given v-component $C$.
In (a), we see that $C$ consists of 15 vertices.
The set $\CANS(C)$ consists of eight black round vertices,
whereas $\CANP(C)$ consists of four paths given by seven white round vertices,
where one out of the four is a singleton. 
%We also see that there exist nine edges
%that connects ordinary vertices.
%We then connects ordinary vertices and
%auxiliary vertices according to the definition.
%We can easily see that auxiliary graph is a graph
%such that candidates in $\CANP(V(G))$ are contracted
%(illustrated as white triangle) and
%auxiliary vertices are inserted into the edge
%that connects ordinary vertices (illustrated as white square).
%
The auxiliary graph $H_C$ is shown in (b),
where $\CANS(C)$ again consists of black round vertices,
$\CANP(C)$ consists of white triangle vertices, and
$\Lambda_{>2}(C)$ consists of white square vertices. 
}
See \figref{aux} for an example of
an auxiliary graph of a given v-component $C$.
In (a), we see that $C$ consists of 16 vertices.
The set $\CANS(C)$ consists of nine black round vertices,
whereas $\CANP(C)$ consists of four paths given by seven white round vertices,
where one out of the four is a singleton. 
The auxiliary graph $H_C$ is shown in (b),
where $\CANS(C)$ again consists of black round vertices,
$\CANP(C)$ consists of white triangle vertices, and
$\Lambda_{>2}(C)$ consists of white square vertices. 

\begin{figure}[t!]
  \centering
  \begin{tabular}{clc}
    \includegraphics[width=4.5cm]{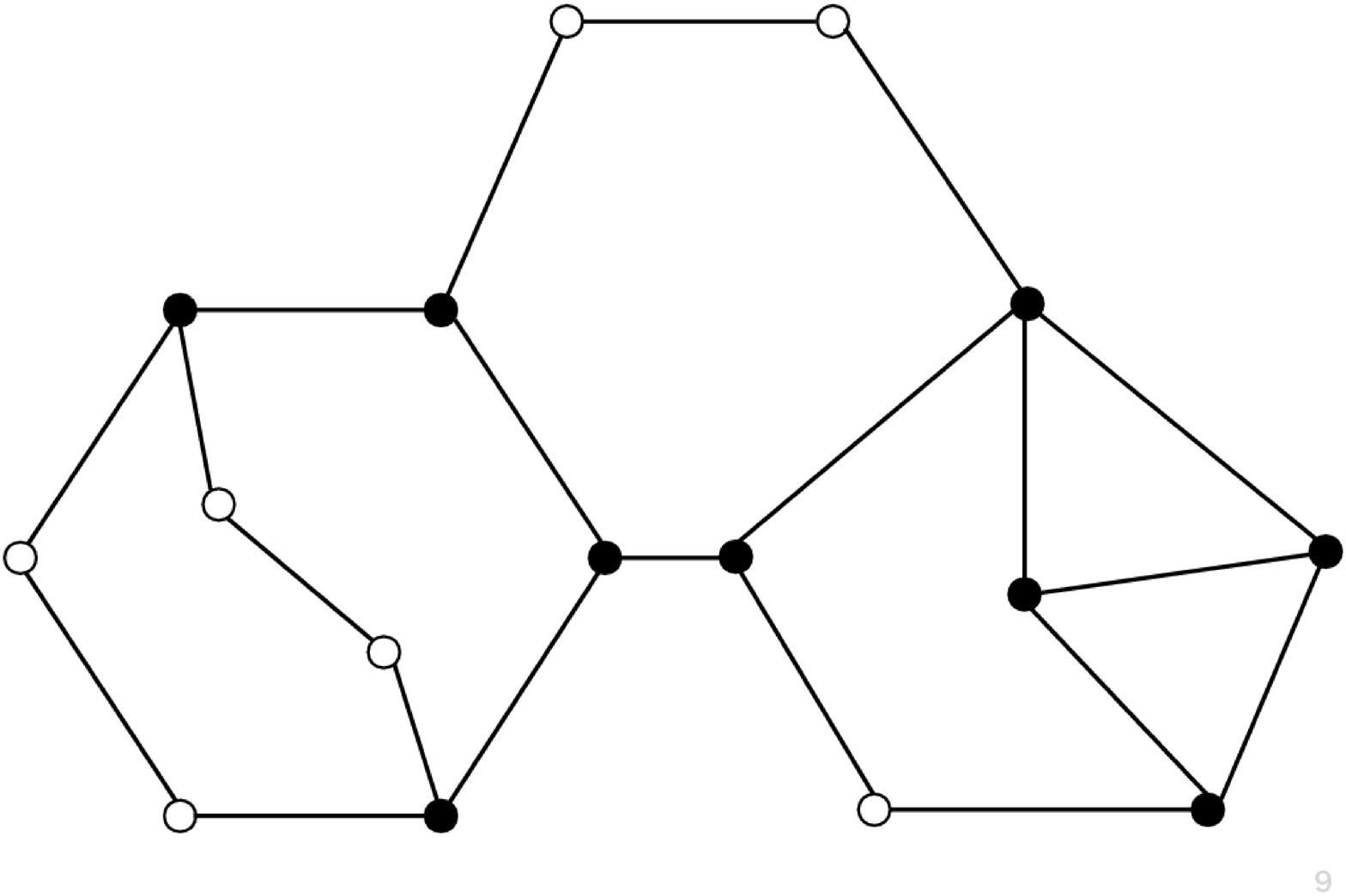}&
    \ \ \ &
    \includegraphics[width=4.5cm]{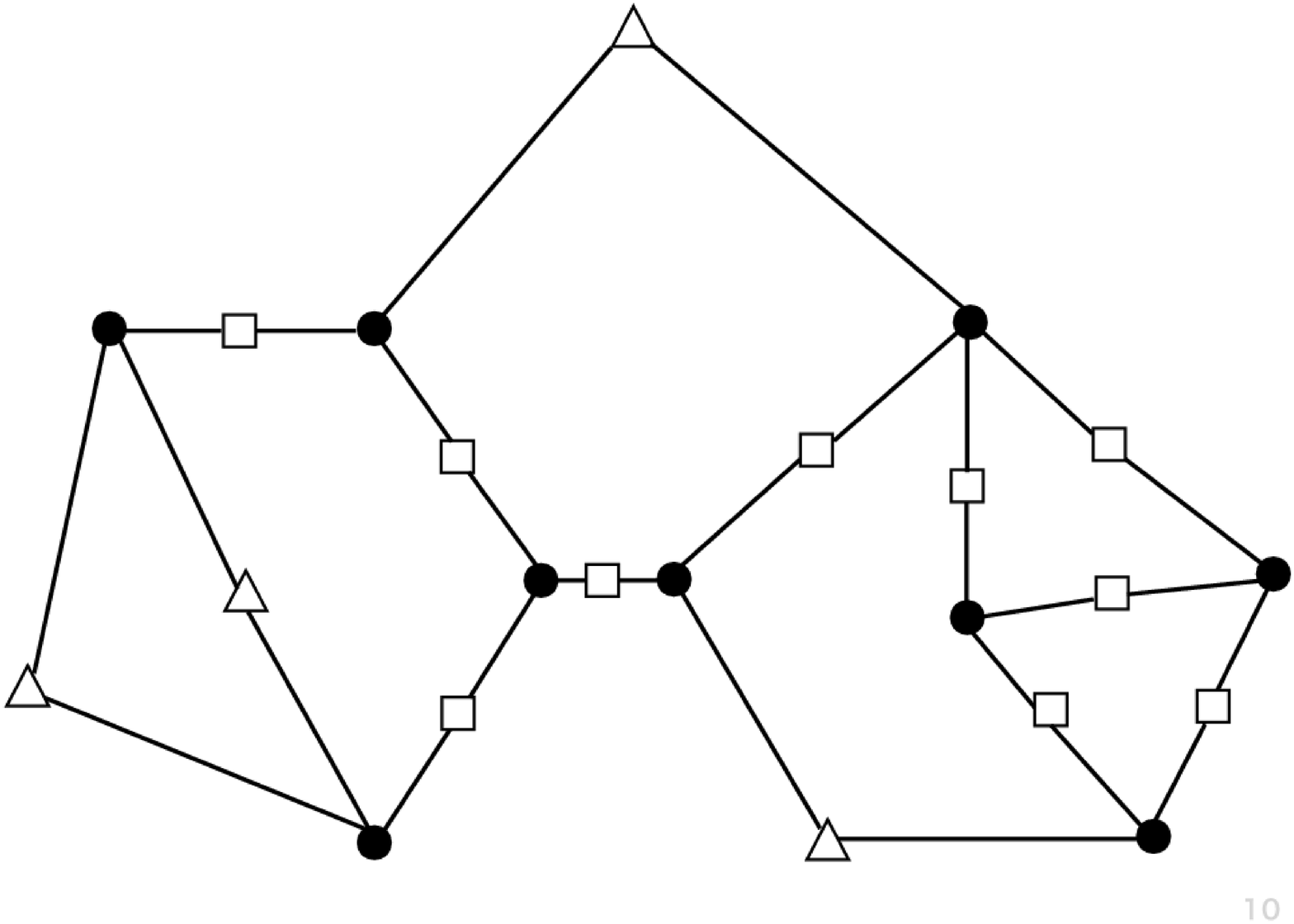}\\
    (a) && (b)
  \end{tabular}
  \caption{(a) v-component $C$; (b) auxiliary graph $H_C$}
  \label{fig:aux}
\end{figure}

For $P\in\CANP(C)$, we denote by $E(P)$ the set of all edges in
the path $P\cup B(P)$.
For $e\in\Lambda_{>2}(C)$, we denote $E(e):=\{e\}$.
%%% for some $u,v\in\CANS(C)$.
We see that $E(G[C])=\bigsqcup_{h\in\Lambda(C)}E(h)$ holds.
% where $\biguplus$ denotes the disjoint union. 

\begin{lem}
  \label{lem:uve}
  Given a simple undirected graph $G$,
  let $C\in\MC_v$ be a v-component such that $G[C]$ is not a cycle
  and $Y\in\CAN(C)$.
  Then $Y\in\MRS_e(C)$ holds if and only if
  there is no auxiliary vertex $h\in\Lambda(C)$ 
  such that $\{Y,h\}$ is a cut point pair of $H_C$. 
\end{lem}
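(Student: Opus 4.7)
The plan is to view $H_C$ as a refined topological representation of $G[C]$: each vertex of $\CANS(C)$ is itself a vertex of $H_C$, each subdivision vertex $e \in \Lambda_{>2}(C)$ represents the single edge $e$ of $G[C]$, and each path-vertex $P \in \CANP(C)$ represents the maximal two-deg path $P$ together with its two incident edges, so that $E(G[C]) = \bigsqcup_{h \in \Lambda(C)} E(h)$. Under this correspondence, a walk in $G[C \setminus Y]$ between two vertices in $\CANS(C) \setminus Y$ translates into a walk in $H_C - Y$ and vice versa, where traversal of an auxiliary vertex $h$ corresponds to using edges in $E(h)$. A useful preliminary is that $G[C \setminus Y]$ is always connected: if $Y = \{v\} \in \CANS(C)$ this is direct from 2-vertex-connectivity of $G[C]$, while if $Y \in \CANP(C)$, by Menger's theorem any two vertices of $C \setminus Y$ are joined by two internally vertex-disjoint paths in $G[C]$ and at most one can enter $Y$ (degree-2 internal vertices force any entering path to traverse $Y$ boundary-to-boundary), leaving a $Y$-avoiding path. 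Combined with \lemref{mrs-e}, this reduces $Y \in \MRS_e(C)$ to the statement that $G[C \setminus Y]$ is bridgeless.

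For the direction $(\Rightarrow)$ I prove the contrapositive. Suppose $\{Y, h\}$ is a cut point pair of $H_C$ with $h \in \Lambda(C)$. If $h = e = ab \in \Lambda_{>2}(C)$, then $a, b \in \CANS(C) \setminus Y$ lie in distinct components of $H_C - Y - h$, and the correspondence forbids any $a$-$b$ path in $G[C \setminus Y] - e$, so $e$ is a bridge of $G[C \setminus Y]$. If $h = P \in \CANP(C)$ with boundaries $a, b$, the same translation yields that $a, b$ are disconnected in $G[C \setminus Y \setminus P]$; since the internal vertices of $P$ all have degree 2 in $G[C]$, this forces every edge in $E(P)$ to be a bridge of $G[C \setminus Y]$. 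In either case $Y \notin \MRS_e(C)$. Conversely, assume $Y \notin \MRS_e(C)$; by the preliminary there is a bridge $f \in E(G[C \setminus Y])$. Let $h_f$ be the unique auxiliary vertex with $f \in E(h_f)$, and let $a, b \in \CANS(C) \setminus Y$ be the two boundary vertices associated with $h_f$ (the endpoints of $f$ if $h_f \in \Lambda_{>2}(C)$, the two boundaries of the path if $h_f \in \CANP(C)$). Any $a$-$b$ path in $H_C - Y - h_f$ would, via the correspondence, give an $a$-$b$ path in $G[C \setminus Y]$ avoiding every vertex in $h_f$ and hence avoiding $f$, contradicting that $f$ is a bridge; therefore $\{Y, h_f\}$ is a cut point pair of $H_C$.

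The hard part will be to formalize the path-correspondence cleanly, particularly for path-vertices $P \in \CANP(C)$: I must show that deletion of the single vertex $P$ in $H_C$ mirrors deletion of the vertex set $P$ (and thus of the edge set $E(P)$) in $G[C]$ with respect to connectivity among ordinary vertices, and that any traversal of $P$ in a walk of $H_C - Y$ can be lifted to a full end-to-end traversal of the path $P$ in $G[C \setminus Y]$. Once this correspondence is nailed down, the bridge/cut-pair equivalence follows directly from the two case analyses above.
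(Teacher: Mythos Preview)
Your approach matches the paper's line for line: use the decomposition $E(G[C])=\bigsqcup_{h\in\Lambda(C)}E(h)$, observe that $G[C\setminus Y]$ is always connected, and then translate bridges in $G[C\setminus Y]$ to articulation points of $H_C-Y$ and back. Your preliminary on connectedness and your case split on $h_f\in\Lambda_{>2}(C)$ versus $h_f\in\CANP(C)$ are more explicit than the paper's one-line versions, but the skeleton is identical.

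There is, however, a genuine gap in the reverse direction, and the paper's own proof shares it. When you pick a bridge $f\in E(G[C\setminus Y])$, set $h_f\in\Lambda(C)$ with $f\in E(h_f)$, and write ``let $a,b\in\CANS(C)\setminus Y$ be the two boundary vertices associated with $h_f$,'' nothing guarantees that both boundaries avoid $Y$ in the path case. Concretely, let $C$ be $K_4$ on $\{a,b,c,d\}$ together with one extra vertex $p$ adjacent to $a$ and $b$. Then $G[C]$ is $2$-vertex-connected and not a cycle, and $Y=\{a\}\in\CANS(C)\subseteq\CAN(C)$. Removing $a$ leaves $p$ with degree~$1$, so $bp$ is a bridge and $Y\notin\MRS_e(C)$. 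But $h_f=P=\{p\}$ has $a$ itself as a boundary; in $H_C-a$ the vertex $P$ is a leaf, so $\{a,P\}$ is \emph{not} a cut point pair. Worse, no choice of auxiliary $h\in\Lambda(C)$ works here: in $H_C-a$ every auxiliary vertex formerly incident to $a$ is a leaf, and the remaining auxiliary vertices $e_{bc},e_{bd},e_{cd}$ lie on the $6$-cycle $b\,e_{bc}\,c\,e_{cd}\,d\,e_{bd}\,b$, so none is an articulation point. Thus the step ``therefore $\{Y,h_f\}$ is a cut point pair of $H_C$'' fails, and the failure is not an artifact of the particular bridge chosen; the obstruction is visible in $H_C$ only through the pair $\{a,b\}$ of two \emph{ordinary} vertices. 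Your correspondence is fine; what breaks is the tacit assumption that the boundaries of $h_f$ survive the deletion of $Y$.
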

\begin{proof}
  For the necessity, suppose that there is $h\in\Lambda(C)$
  such that $\{Y,h\}$ is a cut point pair of $H_C$.
  Then $h$ is an articulation point of $H_C-Y$. 
  Every edge $e\in E(h)$ is a bridge in $G[C]-Y$,
  indicating that $G[C]-Y$ is not 2-edge-connected,
  and hence $Y\notin\MRS_e(C)$.

  For the sufficiency, suppose that $Y\notin\MRS_e(C)$.
  Then $G[C]-Y$ is not 2-edge-connected but should be connected.
  There exists a bridge, say $e$, in $G[C]-Y$. 
  Let $h\in\Lambda(C)$ be the auxiliary vertex such that $e\in E(h)$.
  We see that $h$ is a cut point of $H_C-Y$, indicating that
  $\{Y,h\}$ is a cut point pair of $H_C$. 
\qed\end{proof}

%%%%%%%%%% LINEAR TIME ENUMERATION OF MRS %%%%%%%%%%
\begin{lem}
  \label{lem:mrs}
  Suppose that a simple undirected 2-edge-connected graph $G$ is given.
  Let %$n:=|V(G)|$, $m:=|E(G)|$ and
  $V:=V(G)$.
  For any subset $X\subseteq V$,
  all e-MRSs in $\MRS_e(V,X)$ can be enumerated
  in $O(n+m)$ time and space.
\end{lem}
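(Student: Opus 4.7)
The plan is to reduce the problem to the v-component case via \corref{assumption}, then exploit the structural characterization of e-MRSs given by \lemref{uve}. As preprocessing, I would compute the block decomposition of $G$ (i.e., all maximal v-components $\MAX_v(V)$ together with $\ART(V)$) by a single Hopcroft–Tarjan DFS in $O(n+m)$ time and space, remembering which vertices lie in $X$. Because $\sum_{S\in\MAX_v(V)}(|S|+|E(G[S])|)=O(n+m)$, it will suffice to handle each block $S$ in $O(|S|+|E(G[S])|)$ time.

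For a block $S$ with $G[S]$ a cycle, \corref{assumption} says the only contribution (if any) is $S\setminus\ART(V)$, which exists exactly when $|S\cap\ART(V)|=1$; I would check this and the condition $(S\setminus\ART(V))\cap X=\emptyset$ directly. For a non-cycle block $S$, I would build the auxiliary graph $H_S$ in linear time by scanning $G[S]$: traverse each maximal two-deg path to collapse it into one auxiliary vertex of $\CANP(S)$, recording its boundary pair $B(P)$, and for each edge of $G[S]$ joining two vertices of $\CANS(S)$ insert an auxiliary vertex of $\Lambda_{>2}(S)$. Then $|V(H_S)|+|E(H_S)|=O(|S|+|E(G[S])|)$ and $H_S$ is 2-vertex-connected.

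The heart of the argument then is to enumerate, in linear time, those $Y\in\CAN(S)$ such that no auxiliary vertex $h\in\Lambda(S)$ forms a cut point pair $\{Y,h\}$ in $H_S$. My approach would be to compute the triconnected components of $H_S$ (equivalently, its SPQR tree) in linear time using the algorithm of Gutwenger and Mutzel; every separation pair of $H_S$ is a virtual edge of some skeleton. Walking the SPQR tree once, I would mark each $Y\in\CAN(S)\subseteq V(H_S)$ that co-occurs as an endpoint of a virtual edge with an auxiliary vertex, and output every unmarked $Y$ disjoint from $X$. By \lemref{uve} these are precisely the elements of $\MRS_e(S,X)$ contributed by $S$, and together with the cycle-block contributions they give $\MRS_e(V,X)$ by \corref{assumption}.

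The main obstacle I anticipate is the SPQR-tree step: one must verify that the virtual-edge/separation-pair correspondence can be turned into a single $O(|V(H_S)|+|E(H_S)|)$ traversal producing the "marked" list, rather than iterating over pairs, and that the bookkeeping for ``$Y$ is ordinary vs.\ auxiliary'' and ``$h$ is auxiliary'' is cleanly handled in each skeleton type (S, P, R). If SPQR machinery is considered too heavy, an alternative I would try is to characterize the offending pairs combinatorially using the observation that every auxiliary vertex has degree $2$ in $H_S$: then $\{Y,h\}$ is a cut pair iff the two neighbors of $h$ are separated by $Y$ in $H_S-h$, which can be tested in aggregate via an open-ear or chain decomposition of $H_S$ in linear time. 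Either route yields the required $O(n+m)$ bound after summing over blocks.
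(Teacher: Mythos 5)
Your proposal takes essentially the same route as the paper: block decomposition via a single DFS, the cycle/non-cycle case split from \corref{assumption}, construction of the auxiliary graph in linear time per block, and linear-time enumeration of the separation pairs of the 2-vertex-connected auxiliary graph via triconnected components/SPQR trees (the paper cites exactly Gutwenger--Mutzel and Hopcroft--Tarjan for this step), with \lemref{uve} supplying correctness. The one slip is in the non-cycle case: \corref{assumption} requires the output sets $Y$ to be disjoint not only from $X$ but also from $\ART(V)$ (an e-MRS of a block $S$ that contains an articulation point of $G$ intersects another block and hence is not an e-MRS of $V$), whereas you output every unmarked $Y$ disjoint from $X$ alone; this is immediately repaired, at no asymptotic cost, by filtering against $\ART(V)$, which your preprocessing has already computed.
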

\begin{proof}
  We can complete the required task as follows. 
  (1) We obtain $\ART(V)$ and decompose $V$ into maximal v-components.
  For each maximal v-component $C$,
  (2) if $G[C]$ is a cycle and $|C\cap\ART(V)|=1$, then output $C\setminus\ART(V)$ if it is disjoint with $X$;
  and (3) if $G[C]$ is not a cycle, then we construct an auxiliary graph $H_C$,
  compute all cut points pairs of $H_C$,
  and output all $Y\in\CAN(C)$ that are disjoint with $X\cup\ART(V)$
  and that are not contained
  in any cut point pair together with an auxiliary vertex. 
  The correctness of the algorithm follows by \corref{assumption}
  and \lemref{uve}.

  For the time complexity, (1) can be done in $O(n+m)$ time~\cite{Tarjan.1972}.
  For each $C\in\MAX_v(V)$, let $n_C:=|C|$ and $m_C:=|E(G[C])|$. 
  We can decide in $O(n_C+m_C)$ time
  whether $C$ is in (2), (3) or neither of them. 
  If we are in (2), then the task can be done in $O(n_C)$ time.
  If we are in (3), then the task can be done in $O(n_C+m_C)$ time
  since $H_C$ can be constructed in linear time and
  all cut point pairs of a 2-vertex-connected graph $H_C$
  can be enumerated in linear time~\cite{GutMut.2000,HopTar.1973}.
  An articulation point $v$ appears in at most $\deg_G(v)$
  maximal v-components, and hence $\sum_{C\in\MAX_v(V)}O(n_C)=O(n+m)$.
  The number of maximal v-components is $O(n)$,
  and the overall time complexity over $C\in\MAX_v(V)$
  is $O(n)+\sum_{C\in\MAX_v(V)}O(n_C+m_C)=O(n+m)$. 
  The space complexity analysis is analogous. 
  \invis{
    Suppose $G[C]$ is $2$-vertex-connected and not a cycle.
    This assumption does not lose generality by \corref{assumption}
    since all articulation points in $G$
    can be listed in $O(n+m)$\cite{Tarjan.1972}.
    In order to enumerate all MRSs of $C$,
    first we enumerate candidates
    $\CAN(C)$,
    and then we determine whether each candidate is in $\MRS_e(C)$
    using \corref{first} and \corref{second}.
    Candidates $\CAN(C)$ can be enumerated in $O(n+m)$ by
    checking its degree for each vertices.
    Next, we determine whether each candidate is in $\MRS_e(C)$ or not in $O(n+m)$ by \corref{first} and \corref{second}.
    It is obvious that $G[C]^\ast$ can be constructed
    in $O(n+m)$ time.
    We can enumerate all cut point pairs in $G^*[C]$
    since the all cut point pairs in $2$-vertex-connected graph can be listed
    in $O(n+m)$ time \cite{HopTar.1973, GutMut.2000} and
    thus we gain all MRSs of $V$ in $O(n+m)$ time.
    Then we compute $\MRS_e(C,C\setminus I)$ by
    checking for each minimal removable set containing
    a element of $I$ or not.
  }
\qed\end{proof}

Let us proceed to how we generate all v-MRSs of a v-component $C\in\MCV$. 
%%% By $\MCV\subseteq\MCE$,
%%% \corref{assumption} is applicable to $C$,
%%% where $\MAX_v(C)=\{C\}$ in this case.
%%% It holds that $\MRS_v(C)=\{Y\in\MRS_e(C)\mid G[C-Y]\textrm{\ has\ no\ articulation\ point}\}\subseteq\MRS_e(C)$. 
%
%%% If $G[C]$ is a cycle, then we are done
%%% since $\emptyset=\MRS_e(C)\supseteq\MRS_v(C)$.
%
In this case, the following lemma is
an analogue of \lemref{uve} for the e-component case.

\begin{lem}
  \label{lem:third}
  Given a simple undirected graph $G$,
  let $C\in\MC_v$ be a v-component and $Y\in\CAN(C)$.
  Then $Y\in\MRS_v(C)$ holds if and only if
  there is no ordinary vertex $u\in\CANS(C)$ 
  such that $\{Y,u\}$ is a cut point pair of $H_C$. 
  \invis{
    For a given simple undirected graph $G$,
    let $D$ be a v-component and
    suppose $G^*[D]$ is constructed from $G[D]$.
    For a candidate $X\in\CAN(D)$,
    let $v_X$ denote a vertex such that
    $G^*[D](X)=\{v_X\}$.
    Then $G[D]-X$ has an articulation point
    if and only if,
    for some non-auxiliary vertex $v$, 
    $\{v_X,v\}$ is the cut point pair of $G^*[D]$.
  }
\end{lem}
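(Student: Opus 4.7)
My plan is to mirror the argument of \lemref{uve} while adapting it to the $2$-vertex-connectivity setting. By \lemref{mrs-v}, $Y\in\MRS_v(C)$ is equivalent to $C\setminus Y\in\MCV$, i.e., $G[C]-Y$ is $2$-vertex-connected. Since $G[C]$ is $2$-vertex-connected and $Y\in\CAN(C)$, one checks that $G[C]-Y$ is connected: either $Y=\{v\}$ with $\deg_{G[C]}(v)>2$, in which case $v$ is not an articulation point of $G[C]$; or $Y$ is a maximal two-deg path, and $2$-vertex-connectivity supplies an alternative route between its two boundaries. Hence $Y\in\MRS_v(C)$ holds iff $G[C]-Y$ has no articulation point, and it suffices to prove: $G[C]-Y$ has an articulation point iff some $u\in\CANS(C)\setminus Y$ makes $\{Y,u\}$ a cut point pair of $H_C$.

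First I would formalize a structural dictionary between $H_C$ and $G[C]$. Since $H_C$ is obtained from $G[C]$ by contracting each maximal two-deg path $P\in\CANP(C)$ to a single auxiliary vertex and subdividing each ordinary-to-ordinary edge by an auxiliary vertex, a routine case analysis on edges of $G[C]$ crossing would-be components shows that for $T\subseteq V(H_C)$ consisting of $Y$ together with an ordinary $u\in\CANS(C)\setminus Y$, $H_C-T$ is disconnected iff $G[C]-Y-u$ is disconnected; likewise, for $T=\{Y,P\}$ with $P\in\CANP(C)\setminus\{Y\}$, $H_C-T$ is disconnected iff $G[C]-Y-P$ is disconnected. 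The easy direction of the claim follows immediately: $\{Y,u\}$ being a cut point pair with $u$ ordinary yields that $u$ is an articulation point of $G[C]-Y$.

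For the converse, let $w$ be an articulation point of $G[C]-Y$. If $w\in\CANS(C)$, the dictionary directly gives the ordinary cut point pair $\{Y,w\}$. Otherwise $w$ lies in some $P\in\CANP(C)\setminus\{Y\}$, and the dictionary shows $\{Y,P\}$ is a cut point pair of $H_C$ with $P$ auxiliary. The main obstacle is to upgrade this to an ordinary cut point pair. Let $a,b\in\CANS(C)$ be the two boundaries of $P$; they are distinct, for otherwise the cycle formed by $P$ and its boundary would force that shared boundary to be an articulation point of $G[C]$, contradicting $2$-vertex-connectivity. Also $a,b\notin Y$, since otherwise $P$ would have degree at most $1$ in $H_C-Y$ and hence could not be an articulation point. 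Let $A,B$ be the components of $H_C-Y-P$ containing $a,b$ respectively. Every ordinary vertex has $H_C$-degree equal to its $G[C]$-degree, which exceeds $2$, so $a$ retains at least one auxiliary neighbor $w'\neq P$ in $H_C-Y$; let $c$ be its other (ordinary) endpoint. If $c=b$ or $c\in B$, then going from $a$ through $w'$ to $c$ and within $B$ to $b$ gives a path in $H_C-Y$ from $a$ to $b$ avoiding $P$, contradicting the choice of $A,B$. Hence $c\in A\setminus\{a\}$, so $A\setminus\{a\}\neq\emptyset$, and removing $a$ from $H_C-Y$ isolates $A\setminus\{a\}$ from $B\cup\{P\}$. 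Therefore $\{Y,a\}$ is a cut point pair of $H_C$ with $a$ ordinary, completing the proof. The crux is this last upgrade step, which crucially exploits the degree-at-least-$3$ property of ordinary vertices in $H_C$.
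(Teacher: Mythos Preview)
Your overall strategy is reasonable, but there is a genuine gap in the converse direction. When the articulation point $w$ of $G[C]-Y$ lies on a path $P\in\CANP(C)$, you assert that ``the dictionary shows $\{Y,P\}$ is a cut point pair of $H_C$.'' Your dictionary, however, says $H_C-\{Y,P\}$ is disconnected iff $G[C]-Y-P$ is disconnected; what you actually know is only that $G[C]-Y-w$ is disconnected. Removing the rest of $P$ can kill an entire component, so the implication fails. Concretely, let $C=\{a,b,c,d,p_1,p_2\}$ with $a,b,c,d$ inducing $K_4$ minus the edge $ab$, together with the path $a\,p_1\,p_2\,b$. Then $G[C]$ is $2$-vertex-connected, $P=\{p_1,p_2\}\in\CANP(C)$ with boundaries $a,b$, and for $Y=\{a\}\in\CANS(C)$ the vertex $w=p_2$ is an articulation point of $G[C]-Y$; yet $G[C]-Y-P$ is the triangle on $b,c,d$, hence connected, so $\{Y,P\}$ is \emph{not} a cut point pair of $H_C$. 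Your subsequent deduction ``$a,b\notin Y$'' is then circular: you derive it from $P$ being an articulation point of $H_C-Y$, which you have not established.

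The fix is short: first check whether a boundary of $P$ lies in $Y$. If, say, $a\in Y$ (so $Y=\{a\}$), then in $G[C]-Y$ the path $P$ is a pendant attached at $b$, and since $\deg_{G[C]}(b)>2$ there is a vertex outside $P\cup\{a,b\}$; hence $b\in\CANS(C)$ is an articulation point of $G[C]-Y$, and by your dictionary $\{Y,b\}$ is the desired ordinary cut point pair. If neither boundary is in $Y$, then $w$ has degree~$2$ in $G[C]-Y$ with its two neighbours leading to $a$ and $b$ respectively, so $a$ and $b$ survive in different components of $G[C]-Y-P$; now your dictionary does give that $\{Y,P\}$ is a cut point pair, and your upgrade argument goes through (with the minor caveat that the ``other endpoint'' $c$ of $w'$ could equal $Y$ when $Y$ is ordinary, but then $w'$ itself witnesses $A\setminus\{a\}\neq\emptyset$). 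This is essentially the paper's route, which is more direct: it stays in $G[C]$ and simply observes that when the articulation point $w$ lies on $P$, one of the boundaries of $P$ is already an articulation point of $G[C]-Y$, so there is no need to pass through $H_C$ for the upgrade at all.
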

\begin{proof}
  For the necessity, suppose that there is $u\in\CANS(C)$
  such that $\{Y,u\}$ is a cut point pair of $H_C$.
  Then $u$ is an articulation point of $H_C-Y$.
  This means that $u$ is an articulation point of $G[C]-Y$.
  The subgraph $G[C]-Y$ is not 2-vertex-connected,
  and thus $Y\notin\MRS_v(C)$ holds.

  For the sufficiency, suppose that $Y\notin\MRS_v(C)$.
  There is an articulation point $u\in C-Y$
  of the subgraph $G[C]-Y$.
  If $u\in\CANS(C)$, then we are done.
  Otherwise (i.e., if $u\in\CANP(C)$),
  then let $u'$ be a boundary of the path that contains $u$.
  We see that $u'$ is an articulation point of $G[C]-Y$,
  and that $\{Y,u'\}$ is a cut point pair of $H_C$.   
  \invis{
    We first prove the sufficiency.
    Let $v_a$ be an articulation point of $G[D]-X$.
    There exist the candidates $Y,Z\in\CAN(D)$ and
    the vertices $y\in Y$ and $z\in Z$
    such that
    $y$ and $z$ are disconnected in $G[D]-X-\{v_a\}$.
    We denote by $v_Y\in G^*[D](Y)$ and
    $v_Z\in G^*[D](Z)$
    the associated vertices.
    We then obtain
    $v_Y$ and $v_Z$ are disconnected
    in $G^*[D]-\{v_X,v_a\}$
    since $v_a$ is also
    a non-auxiliary vertex in $G^*[D]$.
    We next prove the necessity.
    Let $v_b$ be a non-auxiliary vertex in $G^*[D]$
    such that $\{v_X,v_b\}$ is
    the cut point pair of $G^*[D]$
    and $v_y'$ and $v_z'$ be vertices that are
    disconnected in $G^*[D]-\{v_X,v_b\}$.
    There are distinct candidates
    $Y',Z'\in\CAN(D)$ such that
    $G^*[D](Y')=\{v_y'\}$ and $G^*[D](Z')=\{v_z'\}$
    by the definition of $\CAN$.
    We then obtain the vertices $y'\in Y'$ and $z'\in Z'$
    is disconnected in $G[D]-X-\{v_b\}$
    since $v_b$ is also a vertex in $G[D]$,
    so that $G[D]-X$ has an articulation point $v_b$.
  }
 \qed\end{proof}

Similarly to the e-component case,
we can show the following lemma on
complexity of generating v-MRSs of a v-component. 
\begin{lem}
    \label{lem:mrsvc}
    Suppose that a simple undirected 2-vertex-connected graph $G$ is given.
    Let %$n:=|V(G)|$, $m:=|E(G)|$ and
    $V:=V(G)$.
    For any subset $X\subseteq V$,
    all v-MRSs in $\MRS_v(V,X)$ can be enumerated
    in $O(n+m)$ time and space.
  \invis{
    For a given simple undirected graph $G=(V,E)$,
    let $D\subseteq V$ be a v-component and
    $I \subseteq D$ be a subset of $D$.
    Then, $\MRS_v(D,D\setminus I)$ can be enumerated
    in $O(n+m)$ time.
  }
\end{lem}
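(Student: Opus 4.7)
The plan is to mirror the proof of \lemref{mrs} closely, with two simplifications and one key substitution. The simplifications stem from the hypothesis that $G$ itself is 2-vertex-connected: we need not decompose $V$ into maximal v-components (as \corref{assumption} required in the e-case) since $V$ is already a v-component, and we need not treat a cycle case separately because \lemref{mrs-v} already tells us every v-MRS lies in $\CAN(V)$. The substitution is that \lemref{third} replaces \lemref{uve}: a candidate $Y\in\CAN(V)$ belongs to $\MRS_v(V)$ iff no \emph{ordinary} vertex $u\in\CANS(V)$ makes $\{Y,u\}$ a cut point pair of the auxiliary graph $H_V$, whereas in the edge case one forbade auxiliary vertices.

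Concretely, I would proceed in four steps. First, scan $G$ in $O(n+m)$ time to identify $\CANS(V)$ (vertices of degree greater than two) and, by a linear traversal of the remaining degree-two vertices, the maximal two-deg paths that form $\CANP(V)$; their union is $\CAN(V)$. Second, construct $H_V$ by contracting each $P\in\CANP(V)$ to a single auxiliary vertex and subdividing each edge of $\Lambda_{>2}(V)$ with an auxiliary vertex, which is also $O(n+m)$ and yields a 2-vertex-connected graph with $O(n)$ vertices and $O(m)$ edges. Third, enumerate all cut point pairs of $H_V$ in $O(n+m)$ time using the algorithms of \cite{HopTar.1973,GutMut.2000}. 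Fourth, for each $Y\in\CAN(V)$ with $Y\cap X=\emptyset$, output $Y$ unless some ordinary $u\in\CANS(V)$ forms a cut point pair $\{Y,u\}$ in $H_V$.

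Correctness is immediate from \lemref{mrs-v} (candidates exhaust $\MRS_v(V)$) and \lemref{third} (the filter is exactly the v-MRS test). Both time and space are $O(n+m)$: the graphs $G$ and $H_V$, the list of cut point pairs (each pair is a constant-size record), and the output buffer all fit within this budget.

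The only step requiring slight care is the fourth: given the enumerated cut point pairs, one must efficiently test, for each candidate $Y\in\CAN(V)$, whether $Y$ is paired with an ordinary vertex. This is easily handled by bucketing the cut point pairs on one endpoint and making a single linear pass over the buckets, marking each candidate flagged by an ordinary partner; unflagged candidates disjoint from $X$ are then output. Since this is purely bookkeeping and no new combinatorial obstacle appears beyond what was already resolved in \lemref{mrs}, I do not anticipate any genuine difficulty; the lemma is essentially the v-connectivity analogue of \lemref{mrs} and inherits its complexity bound verbatim.
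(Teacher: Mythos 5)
Your proposal is correct and follows essentially the same route the paper intends (the paper only sketches this proof as "similar to the e-component case"): build the auxiliary graph $H_V$, enumerate its cut point pairs in linear time, and filter the candidates in $\CAN(V)$ via \lemref{third} instead of \lemref{uve}. The one caveat is that the degenerate case where $G$ itself is a cycle still needs an explicit check (then $\CANS(V)=\emptyset$, $H_V$ and the boundaries $B(P)$ are not well-defined, and $\MRS_v(V)=\emptyset$), but this is detected in $O(n+m)$ time and does not affect the bound.
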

\invis{
\begin{proof}
    We denote by $C=D$ an e-component.
    We can find all MRS of $C$ in $O(n+m)$ time
    by \lemref{mrs}.
    Then we can compute all the MRS of $D$ by
    % \lemref{mrs-vc} 
    the definition of $MRS_v$
    and \lemref{third}.
\qed\end{proof}
}

%%%%%%%%%% PROOF FOR THE THEOREM %%%%%%%%%%
\paragraph{Proofs for Theorems~\ref{thm:ec} and \ref{thm:vc}.}
For \thmref{ec},
we see that $\MCE=\bigsqcup_{S\in\MAX_e(V)}\MCE(S,\emptyset)$. 
We can enumerate all maximal e-components in $\MAX_e(V)$
in $O(n+m)$ time and space, by removing all bridges in $G$. 
All e-components in $\MCE(S,\emptyset)$ for each $S\in\MAX_e(V)$
can be enumerated in $O(q+\theta_t)$ delay and in
$O(n+\theta_s)$ space by \thmref{SD},
where $q=|S|\le n$. 
We can implement $\CMPMRS_{\MCE}$ so that $\theta_t=O(n+m)$ and
$\theta_s=O(n+m)$ by \lemref{mrs}. 
\thmref{vc} is analogous. 
\hfill\qed

%%%%%%%%%%%%%%%%%%%%%%%%%%%%%%%%%%%%%%%%%%%%%%%%%%%%%%%%%%%%
\section{Concluding Remarks}
\label{sec:conc}
In this paper, we proposed the first linear delay algorithms
for enumerating all 2-edge/vertex-connected induced subgraphs
in a given simple undirected graph,
where the key tool is SD property of set system. 

The future work includes
extension of our framework to $k$-edge/vertex-connectivity for $k>2$; and
studying relationship between SD property
and set systems known in the literature (e.g., independent system, accessible system, strongly accessible system, confluent system). 

%Three future works:
%larger k 
%relation between set systems
%C_e and C_v

%%%%%%%%%%%%%%%%%%%%%%%%%%%%%%%%%%%%%%%%%%%%%%%%%%%%%%%%%%%%
% For citations of references, we prefer the use of square brackets
% and consecutive numbers. Citations using labels or the author/year
% convention are also acceptable. The following bibliography provides
% a sample reference list with entries for journal
% articles~\cite{ref_article1}, an LNCS chapter~\cite{ref_lncs1}, a
% book~\cite{ref_book1}, proceedings without editors~\cite{ref_proc1},
% and a homepage~\cite{ref_url1}. Multiple citations are grouped
% \cite{ref_article1,ref_lncs1,ref_book1},
% \cite{ref_article1,ref_book1,ref_proc1,ref_url1}.
%
% ---- Bibliography ----
%
% BibTeX users should specify bibliography style 'splncs04'.
% References will then be sorted and formatted in the correct style.
%
% \bibliographystyle{splncs04}
% \bibliography{mybibliography}
%
\clearpage

\end{document}